\newcommand{\blue}{\textcolor{black}}
\newtheorem{theorem}{\it Theorem}
\newtheorem{corollary}{\it Corollary}
\newcommand{\bE}{\mathds{E}}
\newcommand{\mK}{\mathcal{K}}
\newcommand{\mN}{\mathcal{N}}
\DeclareMathOperator*{\argmax}{argmax}
\begin{document}

\title{Waiting but not Aging: Optimizing Information Freshness Under the Pull Model}

\author{
Fengjiao Li,~\IEEEmembership{Student Member,~IEEE,}
Yu Sang,
Zhongdong Liu,~\IEEEmembership{Student Member,~IEEE,}
Bin Li,~\IEEEmembership{Senior Member,~IEEE,}
Huasen Wu,~\IEEEmembership{Member,~IEEE,}
and Bo Ji,~\IEEEmembership{Senior Member,~IEEE} 
\thanks{
This work was supported in part by the NSF under Grants CCF-1657162, CNS-1651947, and CNS-1717108. A preliminary version of this work was presented at IEEE GLOBECOM 2017 \cite{sang17}.

Fengjiao Li (fengjiaoli@vt.edu), Zhongdong Liu (zhongdong@vt.edu), and Bo Ji (boji@vt.edu) are with the Department of Computer Science, Virginia Tech, Blacksburg, VA. Bin Li (binli@uri.edu) is with the Department of Electrical, Computer and Biomedical Engineering,
University of Rhode Island, Kingston, Rhode Island. 
Bo Ji is the corresponding author.
}
}
\maketitle

\begin{abstract}
The Age-of-Information is an important metric for investigating the timeliness performance in information-update systems. 
In this paper, we study the AoI minimization problem under a new Pull model with replication schemes, where a user proactively sends a replicated request to multiple servers to ``pull" the information of interest. Interestingly, we find that under this new Pull model, replication schemes capture a novel tradeoff between 
different values of the AoI across the servers (due to the random updating processes) and different response times across the servers, which can be exploited to minimize the expected AoI at the user's side. 
Specifically, assuming Poisson updating process for the servers and exponentially distributed response time, we derive a closed-form formula for computing the expected AoI and obtain the optimal number of responses to wait for to minimize the expected AoI. Then, we extend our analysis to the setting where the user aims to maximize the AoI-based utility, which represents the user's satisfaction level with respect to freshness of the received information. Furthermore, we consider a more realistic scenario where the user has no prior knowledge of the system. In this case, we reformulate the utility maximization problem as a stochastic Multi-Armed Bandit problem with side observations and leverage a special linear structure of side observations to design learning algorithms with improved performance guarantees. Finally, we conduct extensive simulations to elucidate our theoretical results and compare the performance of different algorithms. Our findings reveal that under the Pull model, waiting does not necessarily lead to aging; waiting for more than one response can often significantly reduce the AoI and improve the AoI-based utility in most scenarios.
\end{abstract}

\section{Introduction}\label{sec:intro}
The last decades have witnessed the prevalence of smart devices and significant advances in ubiquitous computing and the Internet of things.
This trend is forecast to continue in the years to come \cite{vni2019}. 
The development of this trend has spawned a plethora of real-time services that require timely information/status updates. 
One practically important example of such services is vehicular networks and intelligent transportation systems~\cite{kaul11secon,kaul11globecom}, where accurate status information (position, speed, acceleration, tire pressure, etc.) of a vehicle needs to be shared with other nearby vehicles and road-side facilities in a timely manner in order to avoid collisions and ensure substantially improved road safety.
More such examples include sensor networks for environmental/health monitoring~\cite{ko10,corke10}, wireless channel feedback~\cite{costa15}, news feeds, weather updates, online social networks, flight aggregators (e.g., Google Flights), and stock quote services. 

For systems providing such real-time services, those commonly used performance metrics, such as throughput and delay, 
exhibit significant limitations in measuring the system performance \cite{kaul12infocom}. 
Instead, \emph{the timeliness of information updates becomes a major concern}. To that end, a new metric called the \emph{Age-of-Information (AoI)} 
has been proposed as an important metric for studying the timeliness performance \cite{kaul11secon}.
The AoI is defined as the time elapsed since the most recent update occurred (see Eq.~\eqref{eq:aoi} for a formal definition).
Using this new AoI metric, the work of \cite{kaul12infocom} 
employs a simple system model to analyze and optimize the timeliness performance of an information-update system.
This seminal work has recently aroused dramatic interests from the research community and has inspired a series
of interesting studies on AoI analysis and optimization (see \cite{aoisurvey,sun2020book}  and references therein).

While all prior studies consider a \emph{Push} model, concerning about when and how to 
``push" (i.e., generate and transmit) the updated information to the user, in this paper we introduce a new 
\emph{Pull} model, under which a user sends requests to the servers to proactively ``pull" the information of interest.  
This Pull model is more relevant for many important applications where the user's interest is in the freshness 
of information at the point when the user requests it rather than in continuously monitoring the freshness of information. 
One application of the Pull model is in the real-time stock quote service, where a customer (i.e., user) submits a query to multiple stock quote providers (i.e., servers) sharing common information sources and each provider responds with its most up-to-date information. Other applications include flight aggregators and real estate listings apps.

\emph{To the best of our knowledge, however, none of the existing work on the timeliness optimization has considered 
such a Pull model. In stark contrast, we focus on the Pull model and propose to employ request replication 
to minimize the AoI or to maximize the AoI-based utility at the user's side.} Although a similar Pull model is considered for data synchronization 
in \cite{bright04,bright06}, the problems are quite different and request replication is not exploited. 
Note that the concept of replication is not new and has been extensively studied for various applications 
(e.g., cloud computing and datacenters~\cite{gardner15,ananthanarayanan12}, 
storage clouds~\cite{li16}, parallel computing~\cite{wang14,wang15}, 
and databases \cite{pacitti99,pereira10}).
\emph{However, for the AoI minimization problem under the Pull model, replication schemes exhibit a 
unique property and capture a novel tradeoff between different levels of information freshness and different 
response times across the servers. This tradeoff reveals the power of waiting for more than one response 
and can be exploited to optimize information freshness at the user's side.
}

Next, we explain the above key tradeoff through a comparison with cloud computing systems.
It has been observed that in a cloud or a datacenter, the processing time of a same job can be highly 
variable on different servers \cite{ananthanarayanan12}.
Due to this important fact, replicating a job on multiple servers and waiting for the first finished copy can
help reduce the latency \cite{ananthanarayanan12,gardner15}. Apparently, in such a system it is \emph{not} 
beneficial to wait for more copies of the job to finish, as all the copies would give the same outcome. 
By contrast, in the information-update system we consider, although the servers may possess the same 
type of information (weather forecast, stock prices, etc.), they could have different versions of the 
information with different levels of freshness due to the random updating processes.
In fact, the first response may come from a server with stale information; waiting for more than one 
response has the potential of receiving fresher information and thus helps reduce the AoI. 
\emph{Hence, it is no longer the best to stop waiting after receiving the first response (as in the other aforementioned applications).}
On the other hand, waiting for too many responses will lead to a longer total waiting time, and thus, it also incurs a larger AoI at the user's side.
\emph{Therefore, it is challenging to determine the optimal number of responses to wait for in order to minimize the expected AoI (or to maximize the AoI-based utility) at the user's side.}
The problem is further exacerbated by the fact that the updating rate and the mean response time, which are important to making such decisions, are typically unknown to the user a priori. 


We summarize our key contributions as follows. 
\begin{itemize}
\item To the best of our knowledge, this work, for the first time, introduces the Pull 
model for studying the timeliness optimization problem and proposes to employ request replication to reduce the AoI.

\item Assuming Poisson updating process at the servers and exponentially distributed response time, we derive a closed-form formula for computing the expected AoI and obtain the optimal number of responses to wait for to minimize the expected AoI. We also discuss some extensions to account for more general replication schemes and different types of response time distributions.

\item We further consider scenarios where the user aims to maximize the utility, which is an exponential function of the negative AoI. The utility represents the user's satisfaction level with respect to freshness of the received information.
We derive a set of similar theoretical results for the utility maximization problem.

\item Moreover, we consider a more realistic scenario where the user has no prior knowledge of the system parameters such as the updating rate the and the mean response time.
In this case, we formulate the utility maximization problem as a stochastic Multi-Armed Bandit (MAB) problem with side observations. The side observations lead to the feedback graph with a special linear structure, which can be leveraged to design learning algorithms with improved regret upper bounds.

\item Finally, we conduct extensive simulations to elucidate our theoretical results. We also investigate the impact of the system parameters  on the achieved gain. 
Our findings reveal that under the Pull model, waiting does not necessarily lead to aging; waiting for more than one response can often significantly reduce the AoI and improve the AoI-based utility in most scenarios.
In the case of unknown system parameters, our simulation results show that algorithms exploiting the special linear feedback graph outperform the classic learning algorithms.
\end{itemize}

The remainder of this paper is organized as follows. 
We first discuss related work in Section~\ref{sec:related} and then describe our new Pull model in Section~\ref{sec:model}.
In Section~\ref{sec:aoi}, we analyze the expected AoI under replication schemes and obtain the optimal number of responses for minimizing the expected AoI. In Section~\ref{sec:utility}, we consider the utility maximization problem in the settings where the updating rate and the mean response time are known and unknown, respectively. Section~\ref{sec:sim} presents the simulation results, and we conclude the paper in Section~\ref{sec:conclusion}.

\section{Related Work}\label{sec:related}
Since the seminal work on AoI \cite{kaul11secon}, there has been a large body of work focusing on AoI analysis and optimization in a wide variety of settings and applications (see \cite{aoisurvey,sun2020book} for surveys). However, almost all prior work considers the Push model, in contrast to the Pull model we consider in this paper.

A series of work (e.g., \cite{kaul12infocom,kaul2012status,yates2019age,kam16tit,kam2016controlling,kam2018age,pappas15,costa16,bedewy19tit}) has been focused on analyzing the AoI performance of various queueing models. 
In \cite{kaul12infocom}, the authors analyze the expected AoI in M/M/$1$, M/D/$1$, and D/M/$1$ systems under the First-Come-First-Served (FCFS) policy.
A follow-up work in \cite{kam16tit} extends the analysis to M/M/$2$ and M/M/$\infty$ models. 
The expected AoI is also characterized for M/M/$1$ Last-Come-First-Served (LCFS) model, with and without preemption, for single-source and multi-source systems~\cite{kaul2012status,yates2019age}. 
Furthermore, controlling the AoI through packet deadlines is studied in \cite{kam2016controlling,kam2018age}; the effect of the packet management (e.g., prioritizing new arrivals and discarding old packets) on the AoI is considered in \cite{pappas15,costa16}.
In \cite{bedewy19tit}, the authors show that the preemptive Last-Generated-First-Served (LGFS) policy achieves the optimal (or near-optimal) AoI performance in a multi-server queueing system.

There have also been lots of recent efforts denoted to the design and analysis of AoI-oriented scheduling algorithms in various network settings (e.g., \cite{kadota16,kadota2019scheduling,talak2018opt,talak2018optimizing,lu2018age,joo18ton,talak2017minimizing,bedewy2019age,yates2018age}).
In \cite{kadota16}, the authors aim to minimize the weighted sum AoI of the clients in a broadcast wireless network with unreliable channels. A similar problem with throughput constraints is considered in a follow-up study~\cite{kadota2019scheduling}. In \cite{talak2018opt,talak2018optimizing,joo18ton}, the authors consider AoI-optimal scheduling problems in ad hoc wireless networks under interference constraints. Considering a similar network setting, the authors of \cite{lu2018age} aim to design AoI-aware algorithms for scheduling real-time traffic with hard deadlines.
Recently, the study on AoI has also been pushed towards more challenging settings with multi-hop flows~\cite{talak2017minimizing,bedewy2019age,yates2018age}.

We want to point out that the preliminary version of our paper \cite{sang17} is the first work that employs the Pull model and replication schemes to study the AoI at the user's side. Since then, the idea of replication has also been adopted for studying the AoI under different models (see, e.g., \cite{zhong2017status,zhong2018minimizing,bedewy19tit}). 
Recently, the authors of \cite{yin2019only} also aim to minimize the AoI from the users' perspective by considering multiple users.
Note that outside the AoI area, similar Pull models have been investigated (e.g., for data synchronization~\cite{bright04,bright06}) since decades ago. However, the problems they study are very different, and request replication is not exploited.

Besides the linear AoI considered in the above work, there are several studies that investigate more general functions of the AoI (e.g., \cite{sun17tit,kosta2017age,klugel2019aoi}). 
Such functions are often used to model utility/penalty, which represents the user's satisfaction/dissatisfaction level with respect to freshness of the received information.
In this extended journal version (see Section~\ref{sec:utility}), we also consider the AoI-based utility, which is an exponential function of the negative AoI. However, the model and the problem we consider are quite different from those in the existing work. Furthermore, we study the scenario where the system parameters are unknown and cast the AoI-based utility maximization problem as an online learning problem based on the stochastic MAB formulation.

Although variants of MAB formulations have recently been considered for AoI minimization problems (see, e.g., \cite{sun2019closed,hsu2018age,jiang2019timely}), they consider Markovian MAB, where the state of each arm evolves in a Markovian fashion and the reward drawn at each time is a function of the current state of the selected arm. This is very different from the stochastic MAB we consider, in terms of model, algorithm design, and regret analysis. Moreover, we consider a new Pull model and exploit a special linear structure of the feedback graph to design learning algorithms with improved regret upper bounds.

\section{System Model}\label{sec:model}
We consider an information-update system where a user pulls time-sensitive information from $n$ servers. 
These $n$ servers are connected to a common information source and update their data \emph{asynchronously}. 
We call such a model the \emph{Pull} model (see Fig.~\ref{fig:model}). Let $\mN \triangleq \{1, 2, \dots, n\}$ be the set of indices of the servers, and let $i \in \mN$ be the server index.
We assume that the information updates at the source for each server follow a Poisson process with rate $\lambda>0$ (where the update rate can model system limitations or resource budgets). The updating processes are asynchronous and are assumed to be independent
and identically distributed (\emph{i.i.d.}) across the servers. We also assume that there is no transmission delay from the source to the servers. That is, servers instantaneously receive their updates once the updates are generated at the source.
This implies that the inter-update time (i.e., the time duration between two successive updates) at each server 
follows an exponential distribution with mean $1/\lambda$. 
Let $u_i(t)$ denote the time when the most recent update at server $i$ occurs, and let $\Delta_i(t)$ denote the 
AoI at server $i$, which is defined as the time elapsed since the most recent update at this server:
\begin{equation}
\label{eq:aoi}
\Delta_i(t) \triangleq t - u_i(t). 
\end{equation}
Therefore, the AoI at a server drops to zero if an update occurs at this server; otherwise, the AoI increases linearly as time goes by until the next update occurs. Fig.~\ref{fig:update} provides an illustration of the AoI evolution at server $i$.

In this work, we consider the $(n,k)$ replication scheme, under which the user sends the replicated copies of the 
request to all $n$ servers and waits for the first $k$ responses. Let $R_i$ denote the response time for server $i$.
Note that each server may have a different response time, which is the time elapsed since the request is sent out 
by the user until the user receives the response from this server. We assume that the time for the requests to reach 
the servers is negligible compared to the time for the user to download the data from the servers. Hence, the response 
time can be interpreted as the downloading time. Let $s$ denote the downloading start time, which is the same for 
all the servers, and let $f_i$ denote the downloading finish time for server $i$. Then, the response time for server $i$ 
is $R_i = f_i - s$. We assume that the response time is exponentially distributed with mean $1/\nu$ and is \emph{i.i.d.}  
across the servers.
\blue{Note that the model we consider above is simple, but it suffices to capture the key aspects and novelty of the problem we study.}

\begin{figure}[!t]
\centering
\includegraphics[width=0.4\textwidth]{./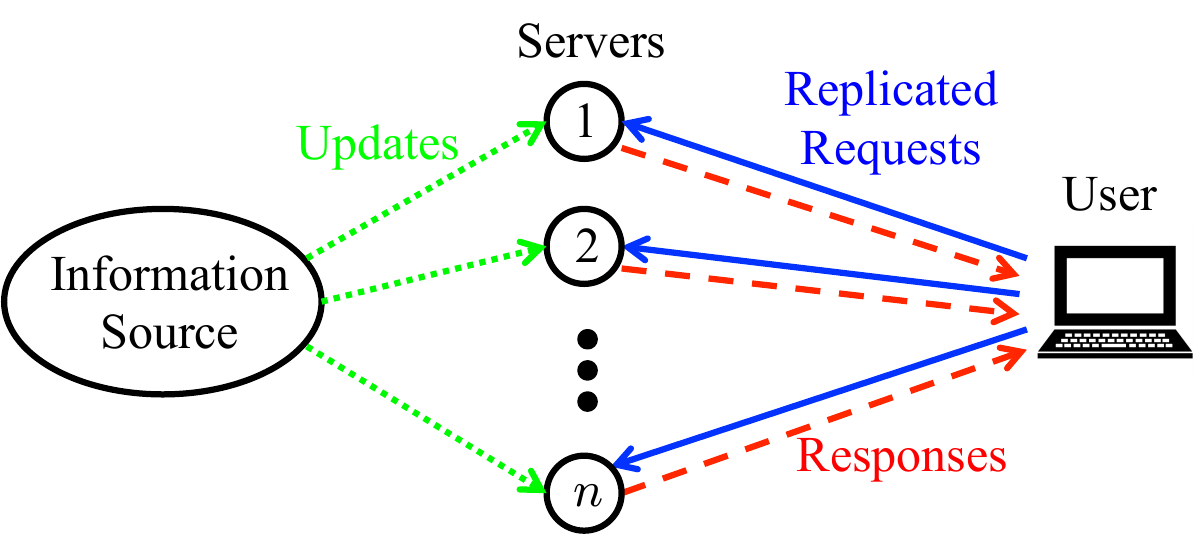}
\caption{The Pull model of information-update systems. 
\blue{Note that the arrows in the figure denote logical links rather than physical connections.
The updates, requests, and responses are all transmitted through (wired or wireless) networks.}}
\label{fig:model}
\end{figure}

\begin{figure}[!t]
\centering
\includegraphics[width=0.35\textwidth]{./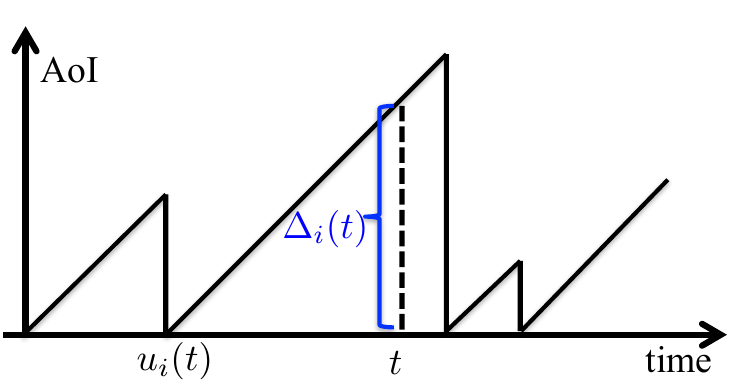}
\caption{An illustration of the AoI evolution at server $i$}
\label{fig:update}
\end{figure}

Under the $(n,k)$ replication scheme, when the user receives the first $k$ responses, it uses the freshest information 
among these $k$ responses to make certain decisions (e.g., stock trading decisions based on the received stock price 
information). Let $(j)$ denote the index of the server corresponding to the $j$-th response received by the user. 
Then, set $\mK \triangleq \{(1), (2), \dots, (k)\}$ contains the indices of the servers that return the first $k$ responses,
and the following is satisfied: $f_{(1)} \le f_{(2)} \le \dots \le f_{(k)}$ and $R_{(1)} \le R_{(2)} \le \dots \le R_{(k)}$.
Let server $i^*$ be the index of the server that provides the freshest information (i.e., that has the smallest AoI) among these $k$ responses
when downloading starts at time $s$, i.e., 
\begin{equation}
\label{eq:istar}
\Delta_{i^*}(s) = \min_{i \in \mK} \Delta_i(s).
\end{equation}
Here, we are interested in the AoI at the user's side when it receives the $k$-th response, denoted by $\Delta(k)$, 
which is the time difference between when the $k$-th response is received and when the information at 
server $i^*$ is updated, i.e., 
\begin{equation}
\label{eq:aoik}
\Delta(k) \triangleq f_{(k)} - u_{i^*}(s).
\end{equation}

Then, there are two natural questions of interest: 

\noindent \emph{(Q1): For a given $k$, can one obtain a closed-form formula 
for computing the expected AoI at the user's side, $\bE[\Delta(k)]$?} 

\noindent \emph{(Q2): How to determine the optimal number of responses 
to wait for, such that $\bE[\Delta(k)]$ is minimized?} 

The second question can be formulated as the following optimization problem:
\begin{equation}
\label{eq:opt}
\min_{k \in \mN} \bE\left[\Delta(k)\right].
\end{equation} 
We will answer these two questions in Section~\ref{sec:aoi}.

Furthermore, we will generalize the proposed framework and consider maximizing an AoI-based utility function at the user's side. The utility maximization problem will be studied in Section~\ref{sec:utility}, where we consider both cases of known and unknown system parameters (i.e., the updating rate and the mean response time).

\section{AoI Minimization}\label{sec:aoi}
In this section, we focus on the AoI minimization problem under the Pull model. We first derive a closed-form formula for computing the expected AoI at the user's side under the $(n,k)$ replication scheme (Section~\ref{sec:e_aoi}). Then, we find the optimal number of responses to wait for in order to minimize the expected AoI (Section~\ref{sec:aoi-opt}). Finally, we discuss some immediate extensions (Section~\ref{sec:extensions}).


\subsection{Expected AoI}\label{sec:e_aoi}

In this subsection, we focus on answering Question (Q1) and derive a closed-form formula for computing the expected AoI under the $(n,k)$ replication scheme.

To begin with, we provide a useful expression of the AoI at the user's side under the $(n,k)$ replication scheme (i.e., $\Delta(k)$, as defined in Eq.~\eqref{eq:aoik}) as follows: 
\begin{equation}
\begin{split}
\label{eq:aoik_l}
\Delta(k) &= f_{(k)} - u_{i^*}(s) \\
&= f_{(k)} - s + s -  u_{i^*}(s) \\
&= R_{(k)} + \Delta_{i^*}(s) \\
&= R_{(k)} + \min_{i \in \mK} \Delta_i(s),
\end{split}
\end{equation} 
where the second last equality is from the definition of $R_i$ and $\Delta_{i}(t)$ (i.e., Eq.~\eqref{eq:aoi}), and the last equality is from Eq.~\eqref{eq:istar}.
As can be seen from the above expression, under the $(n,k)$ replication scheme 
the AoI at the user's side consists of two terms:
(i) $R_{(k)}$, the total waiting time for receiving the first $k$ responses,  
and (ii) $\min_{i \in \mK} \Delta_i(s)$ (also denoted by $\Delta_{i^*}(s)$), the AoI of the freshest information 
among these $k$ responses when downloading starts at time $s$.
An illustration of these two terms and $\Delta(k)$ is shown in Fig.~\ref{fig:tradeoff}.

Taking the expectation of both sides of Eq.~\eqref{eq:aoik_l}, we have

\begin{equation}
\label{eq:e_aoi}
\bE[\Delta(k)] = \bE \left[R_{(k)} \right] + \bE \left [\min_{i \in \mK} \Delta_i(s) \right].
\end{equation}
In the above equation, the first term (i.e., the expected total waiting time) can be viewed as the cost of waiting, while the second term (i.e., the expected AoI of the freshest information among these $k$ responses) can be viewed as the benefit of waiting. Intuitively, as $k$ increases (i.e., waiting for more responses), the expected total waiting time (i.e., the first term) increases. On the other hand, upon receiving more responses, the expected AoI of the freshest information among these $k$ responses (i.e., the second term) decreases. 
Hence, there is a natural tradeoff between these two terms, which is a unique property of our 
newly introduced Pull model. 

Next, we formalize this tradeoff by deriving the closed-form expressions of the above two terms
as well as the expected AoI.
We state the main result of this subsection in Theorem~\ref{thm:e_aoi}.

\begin{figure}[!t]
	\centering
	\includegraphics[width=0.4\textwidth]{./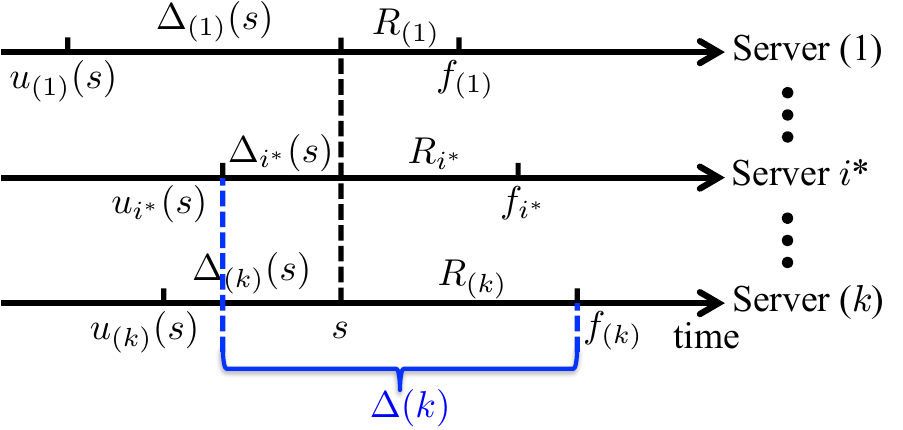}
	\caption{An illustration of the AoI at the user's side and its two terms under the ($n,k$) replication scheme}
	\label{fig:tradeoff}
\end{figure}

\begin{theorem}\label{thm:e_aoi}
	Under the $(n,k)$ replication scheme, the expected AoI at the user's side can be expressed as
	\begin{equation}
	\label{eq:e_aoi_formula}
	\bE[\Delta(k)] = \frac{1}{\nu}(\mathbf{H}(n) - \mathbf{H}(n-k)) + \frac{1}{k\lambda},
	\end{equation}
	where $\mathbf{H}(n) = \sum_{l=1}^n \frac{1}{l}$ is the $n$-th partial sum of the diverging harmonic series.
\end{theorem}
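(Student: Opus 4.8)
The plan is to evaluate the two terms on the right-hand side of Eq.~\eqref{eq:e_aoi} separately --- the expected total waiting time $\bE[R_{(k)}]$ and the expected age of the freshest of the first $k$ responses $\bE[\min_{i \in \mK}\Delta_i(s)]$ --- and then add them.

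For the first term, note that $R_{(k)}$ is the $k$-th smallest among the $n$ i.i.d.\ response times $R_1,\dots,R_n$, each exponentially distributed with rate $\nu$. By the memorylessness of the exponential distribution, the spacings $R_{(1)}, R_{(2)}-R_{(1)}, \dots, R_{(k)}-R_{(k-1)}$ are independent, and $R_{(j)}-R_{(j-1)}$ is exponentially distributed with rate $(n-j+1)\nu$: at the instant the $(j-1)$-th response arrives, $n-j+1$ servers are still pending, their remaining response times are still exponential with rate $\nu$ by memorylessness, and the minimum of these has rate equal to the sum. Summing the means gives $\bE[R_{(k)}] = \sum_{j=1}^{k}\frac{1}{(n-j+1)\nu} = \frac{1}{\nu}\sum_{l=n-k+1}^{n}\frac{1}{l} = \frac{1}{\nu}\bigl(\mathbf{H}(n)-\mathbf{H}(n-k)\bigr)$, which is the first summand in Eq.~\eqref{eq:e_aoi_formula}.

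For the second term, the key observation is that the identity set $\mK$ of the first $k$ responding servers is a function of the response times $\{R_i\}_{i\in\mN}$ only, and these are independent of the updating processes and hence of the ages $\{\Delta_i(s)\}_{i\in\mN}$; moreover the ages are i.i.d.\ across servers, and in steady state each $\Delta_i(s)$ is the backward recurrence time of a rate-$\lambda$ Poisson process, hence exponentially distributed with mean $1/\lambda$. Consequently, conditioning on $\mK = K$ for any fixed size-$k$ subset $K \subseteq \mN$, the quantity $\min_{i\in K}\Delta_i(s)$ is the minimum of $k$ i.i.d.\ $\mathrm{Exp}(\lambda)$ random variables, hence exponential with rate $k\lambda$ and mean $\frac{1}{k\lambda}$; since this does not depend on $K$, we obtain $\bE[\min_{i\in\mK}\Delta_i(s)] = \frac{1}{k\lambda}$. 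Adding the two terms yields Eq.~\eqref{eq:e_aoi_formula}.

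The order-statistics computation for $\bE[R_{(k)}]$ is routine. The step deserving the most care is the second term: one must argue explicitly that the random winning set $\mK$ is independent of the age vector, so that the minimum over $\mK$ is distributionally identical to the minimum over a deterministic $k$-subset, and one must justify that the marginal law of $\Delta_i(s)$ is $\mathrm{Exp}(\lambda)$ --- i.e., invoke the stationary regime (equivalently, a memoryless/PASTA-type argument at the download epoch $s$). Once these two facts are in place, both expectations reduce to elementary properties of exponential random variables and their order statistics.
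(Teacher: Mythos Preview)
Your proposal is correct and follows essentially the same approach as the paper: compute $\bE[R_{(k)}]$ via the exponential-spacings decomposition of order statistics, compute $\bE[\min_{i\in\mK}\Delta_i(s)]$ by noting each $\Delta_i(s)$ is $\mathrm{Exp}(\lambda)$ and taking the minimum of $k$ of them, and add. If anything, you are slightly more careful than the paper in making explicit that $\mK$ is determined by the response times and hence independent of the age vector, a point the paper glosses over.
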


\begin{proof}
	\blue{We first analyze the first term of the right-hand side of Eq.~\eqref{eq:e_aoi} and want to show 
		$\bE[R_{(k)}] = \frac{1}{\nu}(\mathbf{H}(n) - \mathbf{H}(n-k))$.}
	Note that the response time is exponentially distributed with mean $1/\nu$ and is \emph{i.i.d.} across the servers.
	Hence, random variable $R_{(k)}$ is the $k$-th smallest value of $n$ \emph{i.i.d.} exponential random variables with mean $1/\nu$.
	The order statistics results of exponential random variables give that $R_{(j)} - R_{(j-1)}$ is an exponential random variable with mean $\frac{1}{(n+1-j)\nu}$ for any $j \in \mN$, where we set $R_{(0)}=0$ for ease of notation~\cite{firstcourse}.
	Hence, we have the following:
	\begin{equation}
	\begin{split}
	\bE \left[R_{(k)} \right] 
	&= \bE \left[ \sum_{j=1}^{k} (R_{(j)} - R_{(j-1)}) \right] \\
	&= \sum_{j=1}^{k} \bE \left[R_{(j)} - R_{(j-1)}\right] \\
	&= \sum_{j=1}^{k} \frac{1}{(n+1-j)\nu} \\
	&= \frac{1}{\nu}(\mathbf{H}(n) - \mathbf{H}(n-k)).\label{eq:e_R_k}
	\end{split}
	\end{equation}

	Next, we analyze the second term of the right-hand side of Eq.~\eqref{eq:e_aoi} and want to show the following:
	\begin{equation}
	\label{eq:e_Delta_i}
	\bE \left [\min_{i \in \mK} \Delta_i(s) \right] = \frac{1}{k\lambda}.
	\end{equation}
	Note that the updating process at the source for each server is a Poisson process with rate $\lambda$ and is \emph{i.i.d.} across the servers. We also assume that there is no transmission delay from the source to the servers.
	Hence, the inter-update time for each server is exponentially distributed with mean $1/\lambda$. 
	Due to the memoryless property of the exponential distribution, at any given request time $s$, the AoI at each server has the same distribution as the 
	inter-update time, i.e., random variable $\Delta_i(s)$ is also exponentially distributed with mean $1/\lambda$ 
	and is \emph{i.i.d.} across the servers \cite{nelson13}. Therefore, random variable $\min_{i \in \mK} \Delta_i(s)$
	is the minimum of $k$ \emph{i.i.d.} exponential random variables with mean $1/\lambda$, which is also exponentially
	distributed with mean $\frac{1}{k\lambda}$. This implies Eq.~\eqref{eq:e_Delta_i}.
	
	Combining Eqs.~\eqref{eq:e_R_k} and \eqref{eq:e_Delta_i}, we complete the proof.
\end{proof}

\emph{Remark.}
The above analysis indeed agrees with our intuition: while the expected total waiting time for receiving the first $k$ responses (i.e., Eq.~\eqref{eq:e_R_k})
is a monotonically increasing function of $k$, the expected AoI of the freshest information among these $k$ responses 
(i.e., Eq.~\eqref{eq:e_Delta_i}) is a monotonically decreasing function of $k$.

\subsection{Optimal Replication Scheme}\label{sec:aoi-opt}
In this subsection, we will exploit the aforementioned tradeoff and focus on answering Question (Q2) that we discussed at the end of Section~\ref{sec:model}.
Specifically, we aim to find the optimal number of responses to wait for in order to minimize the expected 
AoI at the user's side.

First, due to Eq.~\eqref{eq:e_aoi_formula}, we can rewrite the optimization problem in Eq.~\eqref{eq:opt} as
\begin{equation}
\label{eq:aoi_opt}
\min_{k \in \mN} \frac{1}{\nu}(\mathbf{H}(n) - \mathbf{H}(n-k)) + \frac{1}{k\lambda}.
\end{equation}
Let $k^*$ be an optimal solution to Eq.~\eqref{eq:aoi_opt}.
We state the main result of this subsection in Theorem~\ref{thm:aoi_opt}.

\begin{theorem}
	\label{thm:aoi_opt} 
	An optimal solution $k^*$ to Problem~\eqref{eq:aoi_opt} can be computed as
	\begin{equation}
	k^* = \min \left \{ \left \lceil \frac{2\nu n}{\sqrt{(\lambda + \nu)^2+4\lambda\nu n}+\lambda+\nu} \right \rceil, n  \right \}.
	\end{equation}
\end{theorem}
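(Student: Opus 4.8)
The plan is to treat the objective in~\eqref{eq:aoi_opt} as the integer-valued function
\[
g(k) \triangleq \frac{1}{\nu}\bigl(\mathbf{H}(n)-\mathbf{H}(n-k)\bigr) + \frac{1}{k\lambda}, \qquad k \in \{1,\dots,n\},
\]
and to locate its minimizer by examining the sign of the forward difference $g(k+1)-g(k)$ rather than by differentiating a continuous relaxation (which would leave the rounding ambiguous). First I would use $\mathbf{H}(n-k)-\mathbf{H}(n-k-1)=\tfrac{1}{n-k}$, with the convention $\mathbf{H}(0)=0$ so that $g(n)$ is covered, to obtain, for $k \in \{1,\dots,n-1\}$,
\[
g(k+1)-g(k) = \frac{1}{\nu(n-k)} - \frac{1}{\lambda k(k+1)}.
\]
Since $n-k>0$ and $k(k+1)>0$, this difference is nonnegative if and only if $\lambda k(k+1) \ge \nu(n-k)$, i.e., if and only if the quadratic $q(k) \triangleq \lambda k^2 + (\lambda+\nu)k - \nu n$ is nonnegative.

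Next I would analyze $q$. Its leading coefficient $\lambda$ is positive and $q(0) = -\nu n < 0$, so $q$ has exactly one positive root $k_0$, with $q(k)<0$ for $0 \le k < k_0$ and $q(k)>0$ for $k>k_0$; solving the quadratic and rationalizing the numerator gives
\[
k_0 = \frac{-(\lambda+\nu)+\sqrt{(\lambda+\nu)^2+4\lambda\nu n}}{2\lambda} = \frac{2\nu n}{\sqrt{(\lambda+\nu)^2+4\lambda\nu n}+\lambda+\nu}.
\]
Hence $g(k+1)<g(k)$ exactly for integers $k<k_0$ and $g(k+1)\ge g(k)$ exactly for integers $k\ge k_0$; in other words, the sequence $g(1),\dots,g(n)$ is strictly decreasing up to index $\lceil k_0\rceil$ and nondecreasing thereafter, so it is unimodal. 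Consequently its minimizer over $\{1,\dots,n\}$ is $\lceil k_0\rceil$ when $\lceil k_0\rceil \le n$ and is $n$ otherwise (the case where $k_0$ exceeds the feasible range), which is precisely $k^* = \min\{\lceil k_0\rceil, n\}$. I would record the ``strictly-decreasing-then-nondecreasing implies minimizer equals $\lceil k_0\rceil$'' step as a one-line lemma on integer sequences to keep the argument clean; note also that when $k_0$ is itself an integer, $g(k_0)=g(k_0+1)$ and $\lceil k_0\rceil=k_0$ is still optimal, so the formula continues to hold.

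The computations here are elementary, so the only real care needed is bookkeeping: verifying the forward-difference identity (including the edge value $\mathbf{H}(0)=0$), checking that the sign change is from negative to nonnegative rather than the reverse — this is what pins the rounding down to the ceiling instead of the floor — and simplifying the positive root of $q$ to the stated rationalized form so that the truncation at $n$ is phrased correctly. I expect the translation from the sign pattern of the discrete derivative to the exact closed form, in particular ruling out an off-by-one error between $\lceil k_0\rceil$ and $\lfloor k_0\rfloor$ and at the boundary $k=n$, to be the part most worth stating carefully, even though none of it is deep.
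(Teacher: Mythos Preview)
Your proposal is correct and follows essentially the same approach as the paper: both examine the forward difference $g(k+1)-g(k)=\frac{1}{\nu(n-k)}-\frac{1}{\lambda k(k+1)}$, locate the unique positive root $k_0$ of the associated quadratic, use the resulting sign change to establish unimodality, and conclude $k^*=\min\{\lceil k_0\rceil,n\}$ via the same case split at the boundary. Your phrasing via the quadratic $q(k)$ and the explicit one-line lemma on integer sequences is a touch cleaner than the paper's ``extend $D(k)$ to the reals and solve $D(k')=0$,'' but the underlying argument is identical.
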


\begin{proof}
	We first define $D(k)$ as the difference of the expected AoI between the $(n,k+1)$ and $(n,k)$ 
	replication schemes, i.e., $D(k) \triangleq \bE[\Delta(k+1)] - \bE[\Delta(k)]$ for any $k \in \{1,2,\dots,n-1\}$.
	From Eq.~\eqref{eq:e_aoi_formula}, we have the following:
	\begin{equation}\label{eq:diff}
	D(k) = \frac{1}{(n-k)\nu} - \frac{1}{k(k+1)\lambda},
	\end{equation}
	for any $k \in \{1,2,\dots,n-1\}$.
	It is easy to see that $D(k)$ is a monotonically increasing function of $k$.
	
	We now extend the domain of $D(k)$ to the set of positive real numbers and want to find $k^{\prime}$
	such that $D(k^{\prime}) = 0$. With some standard calculations and dropping the negative solution, 
	we derive the following:
	\begin{equation}
	k^{\prime} = \frac{2\nu n}{\sqrt{(\lambda + \nu)^2+4\lambda\nu n}+\lambda+\nu}.
	\end{equation}
	Next, we discuss two cases: (i) $k^{\prime}>n-1$ and (ii) $0 < k^{\prime} \le n-1$.
	
	In Case (i), we have $k^{\prime}>n-1$. This implies that $D(k)=\bE[\Delta_i(k+1)]-\bE[\Delta_i(k)]<0$ for all $k \in \{1,2,\dots,n-1\}$, as $D(k)$ is monotonically increasing. Hence, the expected AoI, $\bE[\Delta(k)]$, is a monotonically decreasing function for $k \in \mN$.
	Therefore, $k^*=n$ must be the optimal solution.
	
	In Case (ii), we have $0< k^{\prime} \le n-1$. We consider two subcases: $k^{\prime}$ is an integer in $\{1,2,\dots,n-1\}$ and $k^{\prime}$ is not an integer.
	
	If $k^{\prime}$ is an integer in $\{1,2,\dots,n-1\}$, we have $D(k)=\bE[\Delta(k+1)]-\bE[\Delta(k)] \leq 0$ for $k \in \{1,2,\dots,k^{\prime}\}$ and $D(k)=\bE[\Delta(k+1)]-\bE[\Delta(k)]>0$ 
		for $k\in\{k^{\prime}+1,\dots,n-1\}$, as $D(k)$ is monotonically increasing.
		Hence, the expected AoI, $\bE[\Delta(k)]$, is first decreasing (for $k \in \{1,2,\dots,k^{\prime}\}$) and then increasing (for $k\in\{k^{\prime}+1,\dots,n\}$).
		Therefore, there are two optimal solutions: $k^*=k^{\prime}$ and $k^*=k^{\prime}+1$ since $\bE[\Delta(k^{\prime}+1)] = \bE[\Delta(k^{\prime})]$ (due to $D(k^{\prime})=0$).
	
	If $k^{\prime}$ is not an integer, we have $D(k)=\bE[\Delta(k+1)]-\bE[\Delta(k)]<0$ for $k \in \{1,2,\dots,\lfloor k^{\prime} \rfloor\}$ and $D(k)=\bE[\Delta(k+1)]-\bE[\Delta(k)]>0$ 
	for $k\in\{\lceil k^{\prime} \rceil,\dots,n-1\}$, as $D(k)$ is monotonically increasing. 
	Hence, the expected AoI, $\bE[\Delta(k)]$, is first decreasing (for $k \in \{1,2,\dots,\lfloor k^{\prime} \rfloor, \lceil k^{\prime} \rceil \}$) and then increasing (for $k\in\{\lceil k^{\prime} \rceil,\dots,n-1\}$).
	Therefore, $k^* = \lceil k^{\prime} \rceil$ must be the optimal solution.
	
	Combining two subcases, we have $k^* = \lceil k^{\prime} \rceil$ in Case (ii). Then, combining Cases (i) and (ii), we have 
	$k^* = \min \{\lceil k^{\prime} \rceil, n\} =  \min \{ \lceil \frac{2\nu n}{\sqrt{(\lambda + \nu)^2+4\lambda\nu n}+\lambda+\nu} \rceil, n \}$.
\end{proof}

\emph{Remark.} There are two special cases that are of particular interest: 
(i) waiting for the first response only (i.e., $k^*=1$) and (ii) waiting for all the responses (i.e., $k^*=n$).
In Corollary~\ref{cor:aoi_opt_special}, we provide a sufficient and necessary condition for each of these two special cases.

\begin{corollary}
	\label{cor:aoi_opt_special}
	(i) $k^*=1$ is an optimal solution to Problem~\eqref{eq:aoi_opt} if and only if $\lambda \geq \frac{\nu(n-1)}{2}$;
	(ii) $k^*=n$ is an optimal solution to Problem~\eqref{eq:aoi_opt} if and only if $\lambda \leq \frac{\nu}{n(n-1)}$.
\end{corollary}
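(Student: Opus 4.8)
The plan is to exploit the structure already established in the proof of Theorem~\ref{thm:aoi_opt}, where we showed that the increment $D(k) \triangleq \bE[\Delta(k+1)] - \bE[\Delta(k)] = \frac{1}{(n-k)\nu} - \frac{1}{k(k+1)\lambda}$ is a strictly increasing function of $k$ on $\{1,2,\dots,n-1\}$. The key observation is that this monotonicity makes the sequence $\{\bE[\Delta(k)]\}_{k \in \mN}$ unimodal: the increments $D(k)$ change sign at most once, and when they do, only from negative to positive. Consequently, $k^*=1$ is an optimal solution if and only if $\bE[\Delta(1)] \le \bE[\Delta(2)]$, i.e.\ $D(1) \ge 0$; and, symmetrically, $k^*=n$ is an optimal solution if and only if $\bE[\Delta(n)] \le \bE[\Delta(n-1)]$, i.e.\ $D(n-1) \le 0$. (Throughout I would assume $n \ge 2$; the case $n=1$ is trivial since then $k^*=1=n$ and both stated conditions hold vacuously.)

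Given this reduction, the rest is a direct computation. For part~(i), I would substitute $k=1$ into Eq.~\eqref{eq:diff} to get $D(1) = \frac{1}{(n-1)\nu} - \frac{1}{2\lambda}$, and then $D(1) \ge 0 \iff 2\lambda \ge (n-1)\nu \iff \lambda \ge \frac{\nu(n-1)}{2}$. For part~(ii), I would substitute $k=n-1$ to get $D(n-1) = \frac{1}{\nu} - \frac{1}{n(n-1)\lambda}$, so $D(n-1) \le 0 \iff n(n-1)\lambda \le \nu \iff \lambda \le \frac{\nu}{n(n-1)}$. This produces exactly the two claimed equivalences.

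The only delicate point — and the step I expect to be the main (though minor) obstacle — is making the ``if and only if'' tight at the boundaries, i.e.\ correctly handling the ties $D(1)=0$ and $D(n-1)=0$. When $D(1)=0$ we have $\bE[\Delta(1)] = \bE[\Delta(2)]$, so both $k=1$ and $k=2$ are optimal; in particular $k^*=1$ is still \emph{an} optimal solution, which is why the inequality in the corollary is non-strict, and the analogous remark applies to $D(n-1)=0$ in part~(ii). Conversely, if $D(1)<0$ then $\bE[\Delta(2)] < \bE[\Delta(1)]$, so $k=1$ is strictly suboptimal, and if $D(n-1)>0$ then $\bE[\Delta(n)] > \bE[\Delta(n-1)]$, so $k=n$ is strictly suboptimal; combined with unimodality these give the reverse directions. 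As a consistency check, one can verify that plugging $\lambda = \nu(n-1)/2$ into the expression for $k^{\prime}$ from Theorem~\ref{thm:aoi_opt} yields $k^{\prime}=1$, and plugging $\lambda = \nu/(n(n-1))$ yields $k^{\prime}=n-1$, in agreement with the thresholds above.
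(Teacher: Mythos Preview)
Your proposal is correct and follows essentially the same approach as the paper: reduce the two special cases to the sign of the increment $D(k)$ at the endpoints, using the monotonicity of $D(k)$ established in Theorem~\ref{thm:aoi_opt}, and then solve $D(1)\ge 0$ and $D(n-1)\le 0$. Your treatment is in fact more careful than the paper's, which simply asserts ``a little thought gives'' the equivalences without spelling out the tie-handling or the reverse implications.
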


\begin{proof}
	The proof follows straightforwardly from Theorem~\ref{thm:aoi_opt}. 
	A little thought gives the following: $k^*=1$ is an optimal solution if and only if $D(1) \ge 0$. 
	Solving $D(1) = \frac{1}{(n-1)\nu} - \frac{1}{2\lambda} \ge 0$ gives $\lambda \geq \frac{\nu(n-1)}{2}$.
	Similarly, $k^*=n$ is an optimal solution if and only if $D(n-1) \le 0$. 
	Solving $D(n-1) = \frac{1}{\nu} - \frac{1}{n(n-1)\lambda} \le 0$ gives $\lambda \leq \frac{\nu}{n(n-1)}$.
\end{proof}

\emph{Remark.} The above results agree well with the intuition.
For a given number of servers, if the inter-update time is much smaller than the response time (i.e., $1/\lambda \ll 1/\nu$), then the difference of the freshness levels among the servers is relatively small. In this case, it is not beneficial to wait for more responses.
On the other hand, if the inter-update time is much larger than the response time (i.e., $1/\lambda \gg 1/\nu$), 
then one server may possess much fresher information than another server. In this case, it is worth waiting for more responses, which leads to a significant gain in the AoI reduction.

Note that Theorem~\ref{thm:aoi_opt} also implies how the optimal solution (i.e., $k^*$) scales as the number of servers (i.e., $n$) increases: when $n$ becomes large, we have $k^* = \lceil \frac{2\nu n}{\sqrt{(\lambda + \nu)^2+4\lambda\nu n}+\lambda+\nu} \rceil = O(\sqrt{n})$.

\subsection{Extensions}\label{sec:extensions}
In this subsection, we discuss some immediate extensions of the considered model, including more general replication schemes and different types of response time distributions.

\subsubsection{Replication schemes}
So far, we have only considered the $(n,k)$ replication scheme. 
\blue{One limitation of this scheme is that it requires the user to send a replicated request to every server, which may incur a large overhead when there are a large number of servers (i.e., when $n$ is large). }
Instead, a more practical scheme would be to send the replicated requests to a subset of servers.
Hence, we consider the $(n,m,k)$ replication schemes, under which the user sends a replicated 
request to each of the $m$ servers that are chosen from the $n$ servers uniformly at random 
and waits for the first $k$ responses, where $m \in \mN$ and $k \in \{1,2,\dots,m\}$. 
Making the same assumptions as in Section~\ref{sec:model}, we can derive the expected AoI at 
the user's side in a similar manner. Specifically, reusing the proof of Theorem~\ref{thm:e_aoi} and
replacing $n$ with $m$ in the proof, we can show the following:
\begin{equation}
\bE[\Delta(k)] = \frac{1}{\nu}(\mathbf{H}(m) - \mathbf{H}(m-k)) + \frac{1}{k\lambda}.
\end{equation}

\subsubsection{Uniformly distributed response time}
Note that our current analysis requires the memoryless property of the Poisson updating process. 
However, the analysis can be extended to the uniformly distributed response time. We make the 
same assumptions as in Section~\ref{sec:model}, except that the response time is now uniformly 
distributed on interval $[b,b+h]$ with $b \ge 0$ and $h \ge 0$.
In this case, it is easy to derive $\bE[R_{(k)}] = \frac{kh}{n+1} + b$ (see, e.g., \cite{firstcourse}).
Since Eq.~\eqref{eq:e_Delta_i} still holds, from Eq.~\eqref{eq:e_aoi} we have
\begin{equation}
\label{eq:e_aoi_uniform}
\bE[\Delta(k)] = \frac{kh}{n+1} + b + \frac{1}{k\lambda}.
\end{equation}

Following a similar line of analysis to that in the proof of Theorem~\ref{thm:aoi_opt},
we can show that an optimal solution $k^*$ can be computed as
\begin{equation}
k^* =\min \left\{ \left\lceil \frac{2(n+1)}{\sqrt{h^2\lambda^2+4h\lambda(n+1)} + h\lambda} \right \rceil, n \right\}.
\end{equation}

\subsubsection{Heterogeneous servers} In order to obtain the theoretical results and corresponding insights, we have assumed homogeneous servers in our model. However, it is important to consider realistic settings with heterogeneous servers. That is, the servers have different mean inter-update times and different mean response times. In the following, we share our thoughts about the extension of our analysis to the settings with heterogeneous servers and discuss the challenges.
Recall from Eq.~\eqref{eq:e_aoi} that the expected AoI at the user's side consists of two terms: (i) $\bE[R_{(k)}]$, the expected total waiting time for receiving the first $k$ responses, and (ii) $\bE[\min_{i \in \mK} \Delta_i(s)]$, the expected AoI of the freshest information among these $k$ responses at request time $s$.
Following a similar line of analysis to that in the proof of Theorem~\ref{thm:e_aoi} and applying the order statistics results for independent and non-identically distributed exponential random variables, it is not difficult to derive the expression for $\mathbb{E}[R_{(k)}]$, which is more involved though. On the other hand, it becomes much harder to derive the closed-form expression for  $\mathbb{E}[\min_{i\in\mK} \Delta_i(s)]$ as the analysis involves $\binom{n}{k}$ possible combinations for the realization of the first $k$ responses and the probability of each realization depends on the mean response times. Therefore, it becomes more challenging to derive the closed-form expression for the expected AoI and thus the optimal solution $k^*$ in such heterogeneous settings.

\vskip 0.5cm

\section{AoI-based Utility Maximization}\label{sec:utility}
In Section~\ref{sec:aoi}, our study has been focused on minimizing the expected AoI at the user's side. For certain practical applications, however, the user might be more interested in maximizing the utility that is dependent on the AoI than minimizing the AoI itself. Such an AoI-based utility function can serve as a \emph{Quality of Experience (QoE)} metric, which measures the user's satisfaction level with respect to freshness of the received information. To that end, in this section we will investigate the problem of AoI-based utility maximization. Specifically, we will consider both cases of known and unknown system parameters (i.e., the updating rate and the mean response time) in Sections~\ref{sec:utility_known} and \ref{sec:utility_unknown}, respectively.

\subsection{AoI-based Utility Function}\label{sec:utility_def}
Consider a function $U: [0, \infty) \rightarrow [0, \infty)$, which maps the AoI at the user's side under the $(n, k)$ replication scheme (i.e., $\Delta(k)$) to a utility obtained by the user. Such a function $U(\cdot)$ is called a utility function. Similar to \cite{sun17update}, we assume that the utility function $U(\cdot)$ is measurable, non-negative, and non-increasing. The specific choice of the utility function depends on applications under consideration in practice.


We consider the same model as that in Section~\ref{sec:model}.
From the analysis in the proof of Theorem~\ref{thm:e_aoi}, it is easy to see that the AoI, $\Delta(k)$, is the sum of $k+1$ independent exponential random variables, which are $R_{(j)} - R_{(j-1)}$ for $j \in \{1, 2, \dots, k\}$, and $\min_{i \in \mK} \Delta_i(s)$. Therefore, the AoI, $\Delta(k)$, is a hyperexponential random variable (or a generalized Erlang random variable). The probability density function of a hyperexponential random variable with rate parameters $\alpha_1, \alpha_2, \dots, \alpha_r$ can be expressed as $f(x) = \sum_{i=1}^r  w_i \alpha_i e^{-\alpha_i x}$, where $\alpha_i$ is the rate of the $i$-th exponential distribution and $w_i=\prod_{j=1,j\neq i}^r \frac{\alpha_j}{\alpha_j-\alpha_i}$. For the AoI, $\Delta(k)$, we have $r=k+1$, and the rate parameters $\alpha_i$'s are: $\alpha_i = (n+1-i) \nu$ for $i=1,2,\dots,k$ and $\alpha_{k+1} = k \lambda$.
Then, the expected utility can be calculated as
\begin{equation}
\label{eq:e_utility}
\bE[U(\Delta(k))] = \int_0^{\infty} U(x) \sum_{i=1}^{k+1} w_i \alpha_i e^{-\alpha_i x} dx.
\end{equation}
Now, the problem is to find the optimal value $k^*$ that achieves the maximum expected utility:
\begin{equation}
\label{eq:utility_opt}
k^* \in \argmax_{k \in \mN} \bE[U(\Delta(k))].
\end{equation}

In the following subsections, we will consider a specific AoI-based utility function, which is an exponential function of the negative AoI, and aim to maximize the expected utility. We will consider both cases of known and unknown system parameters (i.e., the updating rate and the mean response time).

\subsection{Case with Known System Parameters}\label{sec:utility_known}
In this subsection, we consider a specific AoI-based utility function in the following exponential form:
\begin{equation}\label{eq:utility_exp}
U(\Delta(k)) = e^{-a\Delta(k)},
\end{equation}
where $a$ is a positive constant.
The above exponential utility function implies that the user receives the full utility when the AoI is zero (which is an ideal case) and the utility decreases exponentially as the AoI increases. Such a utility function decreases very quickly with respect to the AoI and is desirable for real-time applications that require extremely fresh information to provide satisfactory service to the users (e.g., stock quote service).

Assuming that the updating rate and the mean response time are known, we first derive a closed-form formula for computing the expected utility $\bE[U(\Delta(k))]$. Then, we find an optimal $k^*$ that yields the maximum expected utility. The main results of this subsection are stated in Theorems~\ref{thm:e_utility} and \ref{thm:utility_opt}.
The proofs of Theorems~\ref{thm:e_utility} and \ref{thm:utility_opt} follow a similar line of analysis to that for Theorems~\ref{thm:e_aoi} and \ref{thm:aoi_opt}, respectively. 
The detailed proofs are provided in Appendices~\ref{app:e_utility} and \ref{app:utility_opt}, respectively.



\begin{theorem}\label{thm:e_utility}
Under the $(n,k)$ replication scheme, the expected utility can be expressed as
\begin{equation}
\label{eq:e_utility_formula}
\bE[U(\Delta(k))] =  \frac{k\lambda}{k\lambda+a}\prod_{j=1}^k \frac{(n+1-j)\nu}{(n+1-j)\nu+a}.
\end{equation}
\end{theorem}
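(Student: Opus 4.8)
The plan is to avoid any new integration and instead reuse the structural fact, already recorded in the proof of Theorem~\ref{thm:e_aoi} and in the discussion preceding Theorem~\ref{thm:e_utility}, that under the $(n,k)$ replication scheme $\Delta(k)$ is a sum of $k+1$ \emph{independent} exponential random variables: the increments $R_{(j)}-R_{(j-1)}$ for $j=1,\dots,k$, with rates $\alpha_j=(n+1-j)\nu$, and $\min_{i\in\mK}\Delta_i(s)$, with rate $\alpha_{k+1}=k\lambda$. Since $U(\Delta(k))=e^{-\Delta(k)}$ and the expectation of a product of independent random variables factorizes, I would write $\bE[e^{-\Delta(k)}]=\bE\big[e^{-\min_{i\in\mK}\Delta_i(s)}\big]\prod_{j=1}^{k}\bE\big[e^{-(R_{(j)}-R_{(j-1)})}\big]$, reducing the problem to evaluating $\bE[e^{-X}]$ for a single exponential $X$.

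Next I would compute, for $X$ exponential with rate $\alpha$, the elementary identity $\bE[e^{-X}]=\int_0^\infty e^{-x}\,\alpha e^{-\alpha x}\,dx=\frac{\alpha}{\alpha+1}$ (equivalently, the Laplace transform $\bE[e^{-sX}]=\alpha/(\alpha+s)$ evaluated at $s=1$). Substituting $\alpha=(n+1-j)\nu$ for the $j$-th response increment and $\alpha=k\lambda$ for $\min_{i\in\mK}\Delta_i(s)$, and multiplying the $k+1$ factors, yields exactly
\[
\bE[U(\Delta(k))]=\frac{k\lambda}{k\lambda+1}\prod_{j=1}^{k}\frac{(n+1-j)\nu}{(n+1-j)\nu+1},
\]
which is Eq.~\eqref{eq:e_utility_formula}.

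The only point that needs care — and hence the ``hard part'' — is the mutual independence of the $k+1$ summands, in particular that $\min_{i\in\mK}\Delta_i(s)$ is independent of $R_{(k)}$ even though the responding set $\mK$ is itself determined by the response times: conditioned on any realization of $\mK$, the variable $\min_{i\in\mK}\Delta_i(s)$ is the minimum of $k$ i.i.d.\ exponential AoIs with rate $\lambda$, whose law does not depend on which $k$ servers were selected and which is independent of the response-time process, so it is independent of the whole collection $\{R_i\}$ and thus of $R_{(k)}$; the independence of the increments $R_{(j)}-R_{(j-1)}$ is the order-statistics (Rényi) representation already invoked for Theorem~\ref{thm:e_aoi}. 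An alternative route would integrate the explicit hyperexponential density in Eq.~\eqref{eq:e_utility} against $U(x)=e^{-x}$ term by term, giving $\sum_{i=1}^{k+1} w_i\,\alpha_i/(\alpha_i+1)$; matching this to the product form requires a partial-fraction identity and is messier, so I would present the factorization argument and, if needed, relegate that check to a remark.
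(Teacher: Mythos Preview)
Your proposal is correct and follows essentially the same approach as the paper: factor $\bE[e^{-\Delta(k)}]$ via independence into $\bE[e^{-\min_{i\in\mK}\Delta_i(s)}]\prod_{j=1}^{k}\bE[e^{-(R_{(j)}-R_{(j-1)})}]$, then apply $\bE[e^{-X}]=\alpha/(\alpha+1)$ for each exponential factor. You are in fact more careful than the paper about justifying the independence of $\min_{i\in\mK}\Delta_i(s)$ from the response times (the paper simply asserts it), and your remark that the hyperexponential-density route via Eq.~\eqref{eq:e_utility} is messier mirrors the paper's own preference for the factorization argument.
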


\begin{theorem}
\label{thm:utility_opt} 
An optimal solution $k^*$ to Problem~\eqref{eq:utility_opt} (i.e., achieving the maximum expected utility) can be computed as
\begin{equation}
k^* = \min \left \{ \left \lceil \frac{2\nu n}{\sqrt{(\lambda + \nu+a)^2+4\lambda\nu n}+(\lambda+\nu+a)} \right \rceil, n  \right \}.
\end{equation}
\end{theorem}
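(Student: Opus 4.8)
The plan is to follow the same strategy as the proof of Theorem~\ref{thm:aoi_opt}, except that, since the closed form in Theorem~\ref{thm:e_utility} is a product rather than a sum, I would analyze a multiplicative ``discrete derivative'' instead of an additive difference. First I fix $n,\lambda,\nu$ and, for $k \in \{1,2,\dots,n-1\}$, consider the ratio $g(k) \triangleq \bE[U(\Delta(k+1))]/\bE[U(\Delta(k))]$. Because $\bE[U(\Delta(k))] > 0$ for every $k$, the sequence $\{\bE[U(\Delta(k))]\}$ is increasing, flat, or decreasing at step $k$ exactly when $g(k)$ is larger than, equal to, or smaller than $1$; so it suffices to track the sign of $g(k)-1$. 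Plugging in the formula of Theorem~\ref{thm:e_utility}, the product in the denominator divides fully into the one in the numerator, leaving only the $j=k+1$ factor, so
\begin{equation*}
g(k) = \frac{(k+1)\lambda\,(k\lambda+1)}{k\lambda\,((k+1)\lambda+1)}\cdot\frac{(n-k)\nu}{(n-k)\nu+1}.
\end{equation*}

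Next I clear the strictly positive denominator $k\lambda\,((k+1)\lambda+1)\,((n-k)\nu+1)$, so that the sign of $g(k)-1$ equals the sign of the resulting numerator. The crux — and the step I expect to demand the most care — is to show that this numerator simplifies to the quadratic
\begin{equation*}
N(k) = -\lambda k^2 - (\lambda+\nu+1)k + \nu n .
\end{equation*}
The simplification rests on the elementary identity $(k+1)(k\lambda+1) = k\bigl((k+1)\lambda+1\bigr) + 1$, i.e.\ the two cross-terms differ by exactly $1$; once this is spotted, the numerator telescopes to $\lambda\bigl[(n-k)\nu - k\bigl((k+1)\lambda+1\bigr)\bigr]$, and expanding gives $\lambda\,N(k)$. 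It is precisely the ``$+1$'' in this identity that produces the extra $+1$ inside $(\lambda+\nu+1)$ relative to the AoI-minimization formula of Theorem~\ref{thm:aoi_opt}.

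Finally I exploit the shape of $N$. It is a downward parabola whose vertex sits at $k = -(\lambda+\nu+1)/(2\lambda) < 0$, hence $N$ is strictly decreasing on $(0,\infty)$ with a unique positive root
\begin{equation*}
k' = \frac{\sqrt{(\lambda+\nu+1)^2+4\lambda\nu n}-(\lambda+\nu+1)}{2\lambda} = \frac{2\nu n}{\sqrt{(\lambda+\nu+1)^2+4\lambda\nu n}+(\lambda+\nu+1)},
\end{equation*}
the second expression obtained by rationalizing the numerator. Thus $g(k)-1 \ge 0$ (utility non-decreasing) for integers $k \le k'$ and $g(k)-1 < 0$ (utility strictly decreasing) for integers $k > k'$, so $\{\bE[U(\Delta(k))]\}$ is unimodal, rising up to $\lceil k'\rceil$ and falling afterwards. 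Repeating verbatim the case analysis in the proof of Theorem~\ref{thm:aoi_opt} — the subcases $k'>n-1$ (forcing $k^*=n$), $k'$ a positive integer (a tie, with both $k'$ and $k'+1$ optimal), and $k'$ non-integer (so $k^*=\lceil k'\rceil$) — then yields $k^* = \min\{\lceil k'\rceil, n\}$, which is the claimed expression. Everything downstream of the identity above is structurally identical to the AoI case, so I anticipate no further difficulty there.
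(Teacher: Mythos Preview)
Your proposal is correct and follows essentially the same approach as the paper: both analyze the ratio $\bE[U(\Delta(k+1))]/\bE[U(\Delta(k))]$, establish that it is monotone in $k$, solve for the unique positive real crossing of $1$, and finish with the identical case analysis from Theorem~\ref{thm:aoi_opt}. The only cosmetic difference is that the paper observes the monotonicity directly from the rewriting $r(k)=\bigl(1+\tfrac{1}{\lambda k^2+(\lambda+1)k}\bigr)\cdot\tfrac{\nu}{\frac{1}{n-k}+\nu}$, whereas you clear denominators and read it off the quadratic $N(k)$; both routes lead to the same $k'$.
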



\emph{Remark.} Similar to the AoI minimization problem studied in Section~\ref{sec:aoi-opt}, there are also two interesting special cases: 
(i) waiting for the first response only (i.e., $k^*=1$) and (ii) waiting for all the responses (i.e., $k^*=n$).
In Corollary~\ref{cor:utility_opt_special}, we provide a sufficient and necessary condition for each case. 

\begin{corollary}
\label{cor:utility_opt_special}
(i) $k^*=1$ is an optimal solution to Problem~\eqref{eq:utility_opt} if and only if $\lambda \geq \frac{\nu(n-1)}{2}-\frac{a}{2}$;
(ii) $k^*=n$ is an optimal solution to Problem~\eqref{eq:utility_opt} if and only if $\lambda \leq \frac{\nu}{n(n-1)} - \frac{a}{n}$.
\end{corollary}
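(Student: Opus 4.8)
The plan is to follow the template of the proof of Corollary~\ref{cor:aoi_opt_special}, but with the additive one-step difference $D(k)$ replaced by a multiplicative one-step ratio, since by Theorem~\ref{thm:e_utility} the expected utility $\bE[U(\Delta(k))]$ is a product of strictly positive factors. Concretely, I would define, for $k \in \{1,2,\dots,n-1\}$,
\[
\rho(k) \triangleq \frac{\bE[U(\Delta(k+1))]}{\bE[U(\Delta(k))]}
= \frac{(k+1)\lambda}{(k+1)\lambda+1}\cdot\frac{k\lambda+1}{k\lambda}\cdot\frac{(n-k)\nu}{(n-k)\nu+1},
\]
which follows directly from Eq.~\eqref{eq:e_utility_formula}. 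The proof of Theorem~\ref{thm:utility_opt} (Appendix~\ref{app:utility_opt}) establishes that $\rho(k)$ is monotonically decreasing in $k$, so that $\bE[U(\Delta(k))]$ is unimodal in $k$ (non-decreasing and then non-increasing). This unimodality is the one fact I would borrow wholesale; granting it, $k^*=1$ is an optimal solution if and only if $\bE[U(\Delta(2))] \le \bE[U(\Delta(1))]$, i.e.\ $\rho(1) \le 1$, and $k^*=n$ is an optimal solution if and only if $\bE[U(\Delta(n))] \ge \bE[U(\Delta(n-1))]$, i.e.\ $\rho(n-1) \ge 1$.

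For part (i), I would substitute $k=1$, simplify the first two factors of $\rho(1)$ to $\tfrac{2(\lambda+1)}{2\lambda+1}$, clear the (positive) denominators in $\rho(1)\le 1$, and cancel the common $2\lambda(n-1)\nu$ term; what remains is the linear inequality $(n-1)\nu - 1 \le 2\lambda$, which is exactly $\lambda \ge \frac{\nu(n-1)}{2} - \frac{1}{2}$. For part (ii), I would substitute $k=n-1$, divide $\rho(n-1)\ge 1$ through by the positive factor $\lambda$ and clear denominators, and again cancel the common $n(n-1)\lambda\nu$ term; what remains is $\nu - (n-1) \ge n(n-1)\lambda$, i.e.\ $\lambda \le \frac{\nu}{n(n-1)} - \frac{1}{n}$.

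There is no genuine obstacle here: once the unimodality of $\bE[U(\Delta(k))]$ is imported from the proof of Theorem~\ref{thm:utility_opt}, each claim reduces to checking a single boundary inequality, and the remaining work is routine algebraic simplification. The only place to be slightly careful is to confirm that all denominators appearing in $\rho(1)$ and $\rho(n-1)$ are strictly positive (which holds since $\lambda,\nu>0$ and $n\ge 2$ for the statement to be non-vacuous), so that multiplying through by them preserves the direction of the inequalities.
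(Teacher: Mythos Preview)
Your proposal is correct and follows essentially the same approach as the paper: the paper's proof likewise defines the one-step ratio $r(k)=\bE[U(\Delta(k+1))]/\bE[U(\Delta(k))]$, invokes the unimodality established in the proof of Theorem~\ref{thm:utility_opt}, and then reduces each part to solving $r(1)\le 1$ and $r(n-1)\ge 1$, respectively. Your algebraic simplifications (including $\tfrac{2(\lambda+1)}{2\lambda+1}$ for the first two factors at $k=1$) match the paper's computations exactly.
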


The proof of Corollary~\ref{cor:utility_opt_special} is provided in Appendix~\ref{app:utility_opt_special}.

\subsection{Case with Unknown System Parameters}\label{sec:utility_unknown}
In Section~\ref{sec:utility_known}, we have addressed the utility maximization problem in Eq. \eqref{eq:utility_opt}, assuming the knowledge of the updating rate (i.e., $\lambda$) and the mean response time (i.e., $1/\nu$). Similar assumptions are also made for obtaining a good understanding of the studied theoretical problems (see, e.g., \cite{aoisurvey,sun17update} and references therein). However, such information is typically unavailable to the user in practice. For example, the user generally has no prior knowledge of the updating processes between the information source and the servers. Moreover, it is difficult, if not impossible, for the user to estimate the updating rate as the user has no direct observation about the updating processes. Therefore, an interesting and important question naturally arises: \emph{How to maximize the expected utility in the presence of unknown system parameters?} 

To that end, in this subsection we aim to address the above question through the design of learning-based algorithms. Specifically, in the presence of unknown system parameters we will reformulate the utility maximization problem as a stochastic \emph{Multi-Armed Bandit (MAB)} problem. 
\emph{To the best of our knowledge, this is one of the first studies that leverage an MAB formulation to study the AoI problem.}


In the following, we will first briefly introduce the basic setup of the stochastic MAB model (Section~\ref{sec:mab_intro}). Then, we formulate the utility maximization problem with unknown system parameters as an MAB problem and explain the special linear feedback graph of our problem, which can be exploited to achieve improved performance guarantees (Section~\ref{sec:mab_formulation}). Finally, we introduce several MAB algorithms that can be applied to address our problem (Section~\ref{sec:mab_alg}).


\subsubsection{The MAB Model}
\label{sec:mab_intro}

The MAB model has been widely employed for studying many sequential decision-making problems of practical importance (clinical trials, network resource allocation, online ad placement, crowdsourcing, etc.) with unknown parameters (see, e.g., \cite{lai1985asymptotically,gittins2011multi,bubeck2012regret,li2019infocom}).

%
In the classic MAB model, a player (i.e., a decision maker) is faced with $n$ options, which are often called arms in the MAB literature. 
In each round, the player can choose to play one arm and receives the reward generated by the played arm. The reward of playing arm $k$ in round $t$, denoted by $X_{k,t}$, is a random variable distributed on interval $[0,1]$, i.e., $X_{k,t} \in [0,1]$. 
The reward $X_{k,t}$ of each arm $k$ is assumed to be \emph{i.i.d.} over time.
Let $\mu_k$ be the mean reward of arm $k$; let $\mu^*$ be the highest mean reward among all the arms, i.e., $\mu^* \triangleq \max_k \mu_k$. The specific distributions of $X_{k,t}$'s and the values of $\mu_k$'s are unknown to the player.

An algorithm $\pi$ chooses an arm $I_t$ to play in each round $t \in \{1,2,\dots,T\}$, where $T$ is the length of the time horizon.
The objective here is to design an algorithm that maximizes the expected cumulative reward during this time horizon, i.e., $\sum_{t=1}^{T} \mu_{I_t}$. This is equivalent to minimizing the \emph{regret}, which is the difference between the expected cumulative reward obtained by an optimal algorithm that always plays the best arm and that of the considered algorithm. We use $R(T)$ to denote the regret, which is formally defined as follows:
\begin{equation}\label{eq:regret}
R(T) \triangleq \mu^*T - \sum_{t=1}^{T} \mu_{I_t}.
\end{equation}

In order to maximize the reward or minimize the regret, the player is faced with a key tradeoff:
how to balance \emph{exploitation} (i.e., playing the arm with the highest empirical mean reward) and \emph{exploration} (i.e., trying other arms, which could potentially be better)? There exist several well-known algorithms that can address this challenge. We will discuss them in Section~\ref{sec:mab_alg}.

\subsubsection{The MAB Formulation of the Utility Maximization Problem}\label{sec:mab_formulation}

We now want to formulate the utility maximization problem with unknown system parameters as an MAB problem.
Note that when the updating rate and the mean response time are unknown, one cannot easily derive a closed-form formula for the expected utility and find the optimal solution as in Section~\ref{sec:utility_known}. Therefore, for each sent request the user needs to decide how many responses to wait for in a dynamic manner. In this case, one can naturally reformulate the utility maximization problem using the MAB model: making a decision for each sent request corresponds to a round; waiting for $k$ responses corresponds to playing arm $k$.  
Let $\Delta(k,t)$ be the AoI at the user's side when the user sends the $t$-th request and waits for the first $k$ responses. Then, the utility $U(\Delta(k,t))$, normalized to interval $[0,1]$, corresponds to the obtained reward $X_{k,t}$ of playing arm $k$ in round $t$. The mean reward of arm $k$ is $\mu_k = \bE[X_{k,t}] = \bE[U(\Delta(k,t)]$. In this MAB formulation, the utility function $U(\cdot)$ is not limited to the exponential function in the form of Eq.~\eqref{eq:utility_exp}; instead, $U(\cdot)$ can be very general, as long as it is a measurable, non-negative, and non-increasing of the AoI. 
Note that for each arm $k$, the reward $X_{k,t}=U(\Delta(k,t))$ is \emph{i.i.d.} over rounds since the AoI $\Delta(k,t)$ is \emph{i.i.d.} over rounds due to the memoryless property of the exponential distribution.
We provide a detailed explanation for this in Appendix~\ref{app:iid_explanation}.


 

Recently, MAB models with side observations have been studied (see, e.g., \cite{mannor2011bandits,caron2012leveraging,buccapatnam2017reward}). In these models, playing an arm not only reveals the reward of the played arm but also that of some other arm(s). Such side observations are typically encoded in a feedback graph, where each node corresponds to an arm and each directed edge $(k,k^{\prime})$ means that playing arm $k$ also reveals the reward of arm $k^{\prime}$.

We would like to point out that the utility maximization problem with unknown system parameters can be formulated as an MAB problem with side observations. 
Note that although the rewards of different arms are dependent in our problem, as pointed out in \cite{ucb1}, the MAB formulation and various learning algorithms are still applicable in such settings.
Moreover, as illustrated in Fig.~\ref{fig:directedgraph}, such dependence leads to a special linear structure in the associated feedback graph of our problem.
Specifically, note that upon receiving the $k$-th response, the user has the information about the first $k-1$ responses. Thus, the user can know the utility she would have obtained if she had waited for only $k^{\prime}$ responses for all $k^{\prime} < k$. Mapping this property to the MAB model, it means that playing arm $k$ reveals not only the reward of arm $k$ but also that of arm $k^{\prime}$ for all $k^{\prime} < k$.
Such special properties can be leveraged to design learning algorithms that perform exploration more efficiently and thus lead to an improved regret performance.

\begin{figure}[!t]
\centering
\includegraphics[width=0.35\textwidth]{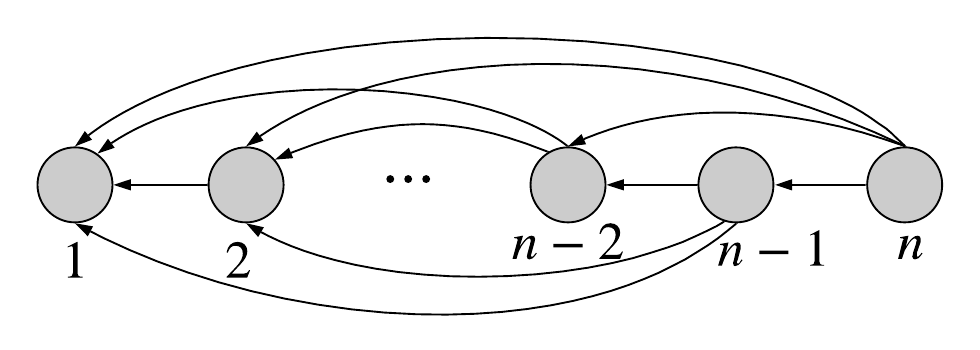}
\caption{Linear feedback graph where each node $k \in \{2,3,\dots,n\}$ has a directed edge to every node in $\{1,2,\dots,k-1\}$ and node 1 does not have any outgoing edge}
\label{fig:directedgraph}
\end{figure}

\begin{algorithm}[!t] 
	\caption{$\epsilon_t$-Greedy \cite{ucb1}}
	\begin{algorithmic}[1]
	    \label{alg:greedyN}
		\STATE Input: $c>0$ and $0 < d < 1$.
		\STATE Let $\bar{x}_k(t)$ be the empirical mean reward of arm $k$ at the beginning of round $t$. 
		\FOR{$t = 1, 2, \dots, T$}
		\STATE Let $j^* \in \argmax_k \bar{x}_k(t)$ and $\epsilon_t \triangleq \min \{1,\frac{cn}{d^2t} \}$.
		\STATE Play arm $j^*$ with probability $1 - \epsilon_t$; play a random arm (uniformly) with probability $\epsilon_t$.
		\STATE Update $\bar{x}_{I_t}(t+1)$ for the played arm $I_t$. \label{line:greedy-update}
		\ENDFOR
	\end{algorithmic}
\end{algorithm}

\begin{algorithm}[!t]
    \caption{$\epsilon_t$-Greedy-LP \cite{buccapatnam2017reward}}
	\begin{algorithmic}[1]\label{alg:greedylp}
		\STATE Input: $c>0$ and $0 < d < 1$.
		\STATE Let $\bar{x}_k(t)$ be the empirical mean reward of arm $k$ at the beginning of round $t$. 
		\FOR{$t = 1, 2, \dots, T$}
		\STATE Let $j^* \in \argmax_k \bar{x}_k(t)$ and $\epsilon_t \triangleq \min \{1,\frac{cn}{d^2t} \}$.
		\STATE Play arm $j^*$ with probability $1 - \epsilon_t$; play arm $n$ with probability $\epsilon_t$.
		\STATE Update $\bar{x}_k(t+1)$ for all $k \in \{1,\cdots, n\}$, accounting for all the observations, including side observations.
		\ENDFOR
	\end{algorithmic}
\end{algorithm}

\subsubsection{Learning Algorithms}\label{sec:mab_alg}

There exist several well-known learning algorithms that can address the classic MAB problem, including \emph{$\epsilon_t$-Greedy} \cite{ucb1} and \emph{Upper Confidence Bound (UCB)} \cite{lai1985asymptotically,ucb1,iucb}. 
In the sequel, we will introduce these algorithms and explain how to leverage the side observations and the special linear structure of the graphical feedback to design algorithms with improved regret upper bounds.

We begin with $\epsilon_t$-Greedy, a very simple algorithm that performs exploration explicitly. Specifically, it plays the arm with the highest empirical mean reward with probability $1-\epsilon_t$ (i.e., exploitation) and plays a random arm (uniformly) with probability $\epsilon_t$ (i.e., exploration), where $\epsilon_t$ decreases as $O(1/t)$. 
We summarize the operations of $\epsilon_t$-Greedy in Algorithm~\ref{alg:greedyN}.
When side observations are available, one can incorporate additional samples from side observations into the update of the empirical mean reward of non-played arms. We call $\epsilon_t$-Greedy that exploits the side observations as \emph{$\epsilon_t$-Greedy-N}. 
Note that $\epsilon_t$-Greedy-N is almost the same as $\epsilon_t$-Greedy except that in Line~\ref{line:greedy-update} of Algorithm~\ref{alg:greedyN}, one needs to update the empirical mean reward $\bar{x}_k(t+1)$ for all arms $k \in \{1,\cdots, n\}$, including the non-played arms, accounting for side observations. 
Apparently, $\epsilon_t$-Greedy-N accelerates the exploration process by taking advantage of additional samples from side observations and is expected to outperform $\epsilon_t$-Greedy.

Although $\epsilon_t$-Greedy-N leverages side observations and can speed up the exploration process compared to $\epsilon_t$-Greedy, it still randomly chooses an arm during the exploration process, being agnostic about the structure of the feedback graph. Therefore, all the arms have to be played in the exploration phase (i.e., with probability $\epsilon_t$). The analysis in \cite{ucb1} suggests that $O(\log{T})$ samples are sufficient for accurately estimating the mean reward of an arm. This implies that both $\epsilon_t$-Greedy and $\epsilon_t$-Greedy-N have a regret upper bounded by $O(n \log{T})$. However, many of such explorations appear unnecessary in our studied problem. This is because in our utility maximization problem, playing arm $n$ reveals a sample for every arm, due to the special linear structure of the feedback graph in Fig.~\ref{fig:directedgraph}. This suggests that one should always choose arm $n$ for exploration, which leads to a graph-aware algorithm summarized in Algorithm~\ref{alg:greedylp}. We call this algorithm \emph{$\epsilon_t$-Greedy-LP} as it turns out to be a special case of the $\epsilon_t$-Greedy-LP algorithm proposed in \cite{buccapatnam2017reward}.

One can show that the regret of $\epsilon_t$-Greedy-LP is upper bounded by $O(\log{T})$, which improves upon $O(n \log{T})$ of $\epsilon_t$-Greedy and $\epsilon_t$-Greedy-N. This result follows immediately from Corollary~8 of \cite{buccapatnam2017reward} as the linear feedback graph in Fig.~\ref{fig:directedgraph} is a special case of the graphs considered in \cite{buccapatnam2017reward}.
Note that the improved regret upper bound relies on the assumption that $\epsilon_t$-Greedy-LP has the knowledge of the difference between the reward of the optimal arm and that of the best suboptimal arm for choosing parameters $d$ and $c$ (see Corollary~8 of \cite{buccapatnam2017reward} for the specific form).

\begin{algorithm}[!t]
	\caption{UCB1 \cite{ucb1}}
	\begin{algorithmic}[1]\label{alg:ucb1}
		\STATE Let $\bar{x}_k(t)$ and $T_k(t)$ be the empirical mean reward and the total number of samples of arm $k$ at the beginning of round $t$, respectively. 
		\FOR{$t = 1, 2, \dots, T$}
		\STATE Play arm $I_t$ such that 
		\begin{equation}\label{eq:ucb1}
		I_t \in  \argmax_{k} \left \{ \bar{x}_k(t) + \sqrt{\frac{2\log{t}}{T_k(t)}} \right \}.
		\end{equation}
		\STATE Update $\bar{x}_{I_t}(t+1)$ and $T_{I_t}(t+1)$. \label{line:ucb-update}
		\ENDFOR
	\end{algorithmic}
\end{algorithm}

\begin{algorithm}[!t]
	\caption{UCB-LP \cite{buccapatnam2017reward}}
	\begin{algorithmic}[1]\label{alg:ucblp}
		\STATE \textbf{Initialization:} Set $B_0 = \mN$ and $\tilde{\delta}_0 = 1$.
		\STATE Let $\bar{x}_k(m)$ and $T_k(m)$ be the empirical mean reward and the total number of samples of arm $k$ up to and including stage $m$.
		\FOR{$m = 0, 1, 2, \dots, \left \lfloor \frac{1}{2} \log_2{\frac{T}{e}}\right \rfloor$}
		\STATE \textbf{Arm selection:} 
		\STATE Let $t_m \triangleq \left \lceil \frac{2\log{(T\tilde{\delta}_m^2)}}{\tilde{\delta}_m^2} \right \rceil $. 
		\IF{$|B_m| = 1$} 
		\STATE Play the single arm in $B_m$ until time $T$.
		\ELSIF{$2|B_m|\tilde{\delta}_m \geq 1$} \label{line:ucblp-explore-start}
		\STATE Play arm $n$ for $(t_m - t_{m-1})$ times.
		\ELSE
		\STATE Play each arm $k \in B_m$ for $(t_m - t_{m-1})$ times.
		\ENDIF \label{line:ucblp-explore-end}
		\STATE Update $\bar{x}_k(m)$ and $T_k(m)$ for all $k \in \{1,\cdots,n\}$.
		\STATE \textbf{Arm elimination:}
		\STATE Let $D_m$ be the set of all arms $j$ in $B_m$ for which
		\begin{equation}\label{eq:elimination}
		\bar{x}_j(m) + \sqrt{\frac{\log{(T\tilde{\delta}_m^2)}}{2T_j(m)}} < \max_{k\in B_m} \left \{ \bar{x}_k(m) - \sqrt{\frac{\log{(T\tilde{\delta}_m^2)}}{2T_k(m)}} \right \}.
		\end{equation}
		\STATE Set $B_{m+1} = B_m \setminus D_m$ and $\tilde{\delta}_{m+1}  =  \tilde{\delta}_m/2$.
		\ENDFOR
	\end{algorithmic}
\end{algorithm}

\begin{algorithm}[!t]
	\caption{UCB-LFG}
	\begin{algorithmic}[1]\label{alg:ucblfg}
		\STATE \textbf{Initialization:} Set $B_0 = \mN$ and $\tilde{\delta}_0 = 1$.
		\STATE Let $\bar{x}_k(m)$ and $T_k(m)$ be the empirical mean reward and the total number of samples of arm $k$ up to and including stage $m$.
		\FOR{$m = 0, 1, 2, \dots, \left \lfloor \frac{1}{2} \log_2{\frac{T}{e}}\right \rfloor$}
		\STATE \textbf{Arm selection:} 
		\STATE Let $t_m \triangleq \left \lceil \frac{2\log{(T\tilde{\delta}_m^2)}}{\tilde{\delta}_m^2} \right \rceil $. 
		\IF{$|B_m| = 1$} 
		\STATE Play the single arm in $B_m$ until time $T$.
		\ELSE
		\STATE Play arm $j^*(B_m)$ for $(t_m - t_{m-1})$ times, where $j^*(B_m)$ is the largest index of arms in set $B_m$. \label{line:ucblfg-explore}
		\ENDIF
		\STATE Update $\bar{x}_k(m)$ and $T_k(m)$ for all $k \in \{1,\cdots, n\}$.
		\STATE \textbf{Arm elimination:}
		\STATE Let $D_m$ be the set of all arms $j$ in $B_m$ for which
		\begin{equation}\label{eq:ucblfg-elimination}
		\bar{x}_j(m) + \sqrt{\frac{\log{(T\tilde{\delta}_m^2)}}{2T_j(m)}} < \max_{k\in B_m} \left \{ \bar{x}_k(m) - \sqrt{\frac{\log{(T\tilde{\delta}_m^2)}}{2T_k(m)}} \right \}.
		\end{equation}
		\STATE Set $B_{m+1} = B_m \setminus D_m$ and $\tilde{\delta}_{m+1}  =  \tilde{\delta}_m/2$.
		\ENDFOR
	\end{algorithmic}
\end{algorithm}

Next, we consider another simple algorithm called Upper Confidence Bound (UCB). As the name suggests, UCB considers the upper bound of a suitable confidence interval for the mean reward of each arm and chooses the arm with the highest such upper confidence bound (see, e.g., Eq.~\eqref{eq:ucb1}). There are several variants of the UCB algorithm \cite{lai1985asymptotically,ucb1,iucb}. We present a popular variant, called \emph{UCB1}, in Algorithm~\ref{alg:ucb1}. When side observations are available, similar to $\epsilon_t$-Greedy-N, there is a slightly modified UCB algorithm, called \emph{UCB-N} \cite{caron2012leveraging}, which incorporates additional samples from side observations into the update of the empirical mean reward and the total number of samples of non-played arms (i.e., Line~\ref{line:ucb-update} in Algorithm~\ref{alg:ucb1}).
Like $\epsilon_t$-Greedy-N, UCB-N is also agnostic about the structure of the feedback graph. In order to take the graph structure into consideration, we introduce \emph{UCB-LP}, which is based on another UCB variant, called UCB-Improved~\cite{iucb}. UCB-LP is a special case of the one proposed in \cite{buccapatnam2017reward}. We summarize the operations of UCB-LP in Algorithm~\ref{alg:ucblp}.

The key idea of UCB-LP is the following: we divide $T$ into multiple stages. For each stage $m$, we use $B_m$ to denote the set of arms not eliminated yet and use $\tilde{\delta}_m$ to estimate $\delta_k$.
At the beginning, set $B_0$ is initialized to the set of all arms; the value of $\tilde{\delta}_0$ is initialized to 1. We ensure that by the end of stage $m$, there are at least $t_m$ samples available for each arm in set $B_m$, from  playing either arm $n$ or the arm itself, where $t_m$ is determined by $\tilde{\delta}_m$ (Lines~\ref{line:ucblp-explore-start}-\ref{line:ucblp-explore-end}). Then, at the end of stage $m$ we obtain set $B_{m+1}$ by eliminating those arms estimated to be suboptimal according to Eq.~\eqref{eq:elimination} and obtain $\tilde{\delta}_{m+1}$ by halving the value of $\tilde{\delta}_m$.

Under some mild assumptions, one can show that for our studied problem with a linear feedback graph, UCB-LP achieves an improved regret upper bounded of $O(\log{T})$ compared to $O(n\log{T})$ of UCB1 and UCB-N. This result follows immediately from Proposition~10 of \cite{buccapatnam2017reward}.
Note that without the knowledge of $\delta_k$, UCB-LP achieves an improved regret upper bound that is similar to that of $\epsilon_t$-Greedy-LP. Although UCB-LP presented in Algorithm~\ref{alg:ucblp} requires the information about the time horizon $T$, this requirement can be relaxed using the techniques suggested in \cite{iucb,buccapatnam2017reward}.

Furthermore, leveraging the linear feedback graph in Fig.~\ref{fig:directedgraph}, we propose a further enhanced UCB algorithm by slightly modifying UCB-LP. We call this new algorithm UCB-LFG (UCB-Linear Feedback Graph) and present it in Algorithm~\ref{alg:ucblfg}.
The key difference is in Line~\ref{line:ucblfg-explore} (vs. Lines~\ref{line:ucblp-explore-start}-\ref{line:ucblp-explore-end} in Algorithm~\ref{alg:ucblp}).
Recall that in Algorithm~\ref{alg:ucblp}, the purpose of Lines~\ref{line:ucblp-explore-start}-\ref{line:ucblp-explore-end} is to explore arms in set $B_m$. Specifically, this is to ensure that by the end of stage $m$, each arm in set $B_m$ has at least $t_m$ samples. Because of the linear feedback graph, this exploration step can be achieved in a smaller number of rounds. Specifically, during stage $m$ we simply play arm $j^*(B_m)$ for $(t_m - t_{m-1})$ times, where $j^*(B_m)$ is the largest index of arms in set $B_m$. This is because each time when arm $j^*(B_m)$ is played, there will be a sample generated for every arm in set $B_m$, thanks to the special structure of the linear feedback graph. Following the regret analysis for UCB-LP in \cite{buccapatnam2017reward}, we can show that UCB-LFG achieves an improved regret upper bounded of $O(\log{T})$. Although UCB-LFG and UCB-LP have the same regret upper bound, UCB-LFG typically achieves a better empirical performance than UCB-LP. This can be observed from the simulation results in Section \ref{sec:sim_utility}.

\begin{figure*}[!t]
\centering
\begin{subfigure}[b]{0.31\linewidth}
	\centering
	\includegraphics[width=1\textwidth]{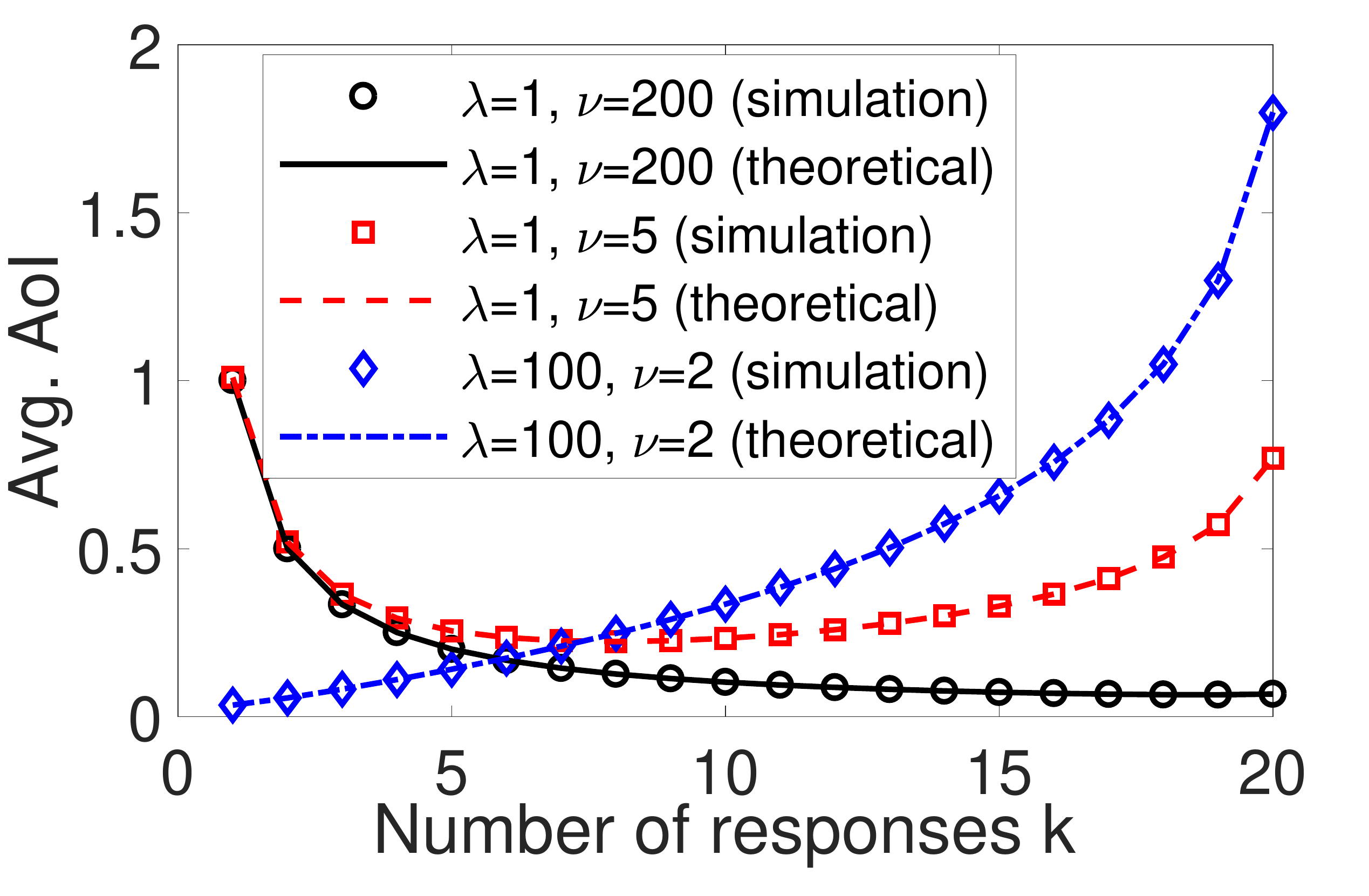}
	\caption{Exponential response time}
	\label{fig:exp}
\end{subfigure}
\quad
\begin{subfigure}[b]{0.31\linewidth}
	\centering
	\includegraphics[width=1\textwidth]{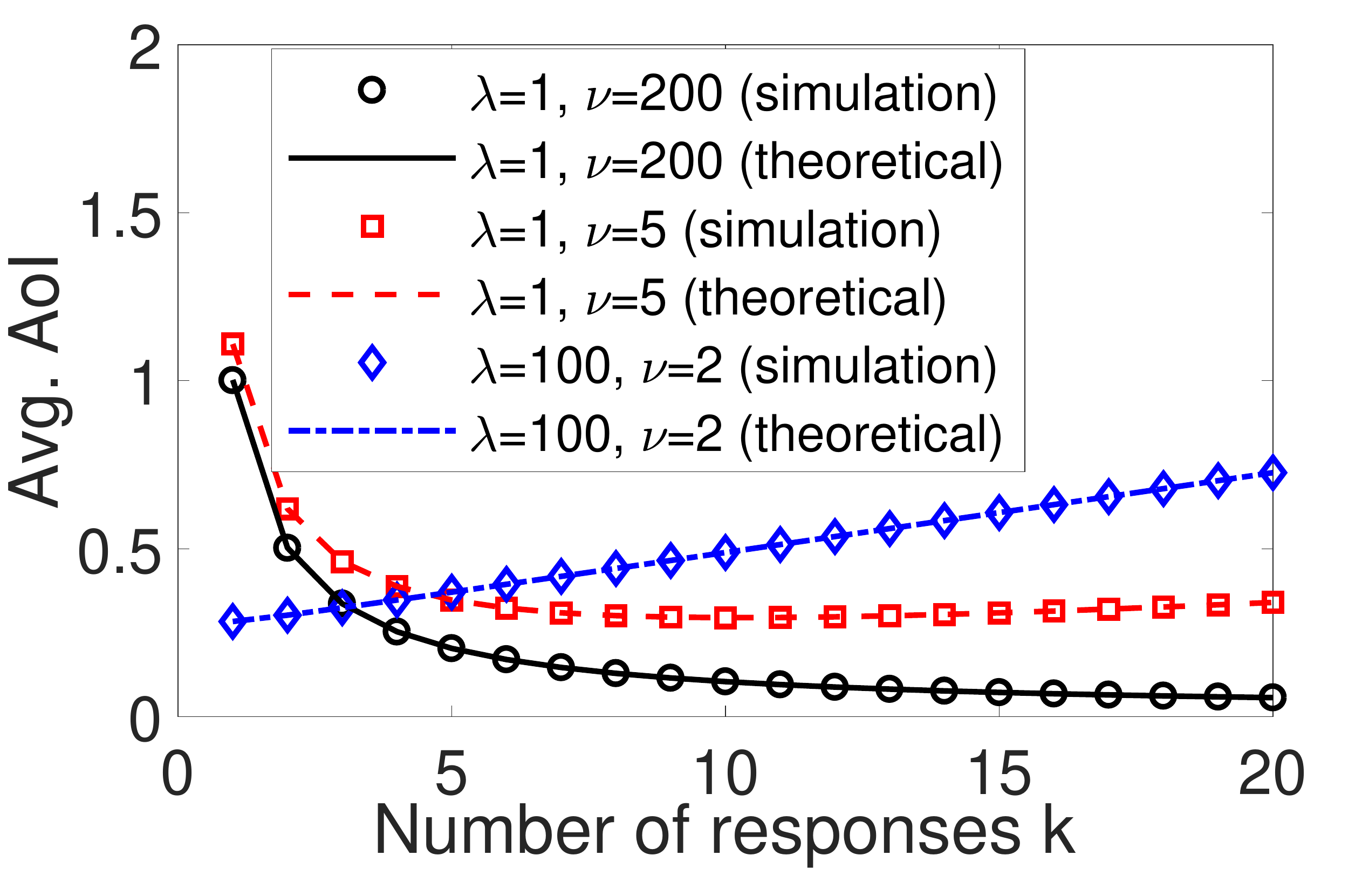}
	\caption{Uniform response time}
	\label{fig:uni}
\end{subfigure}
\quad
\begin{subfigure}[b]{0.31\linewidth}
	\centering
	\includegraphics[width=1\textwidth]{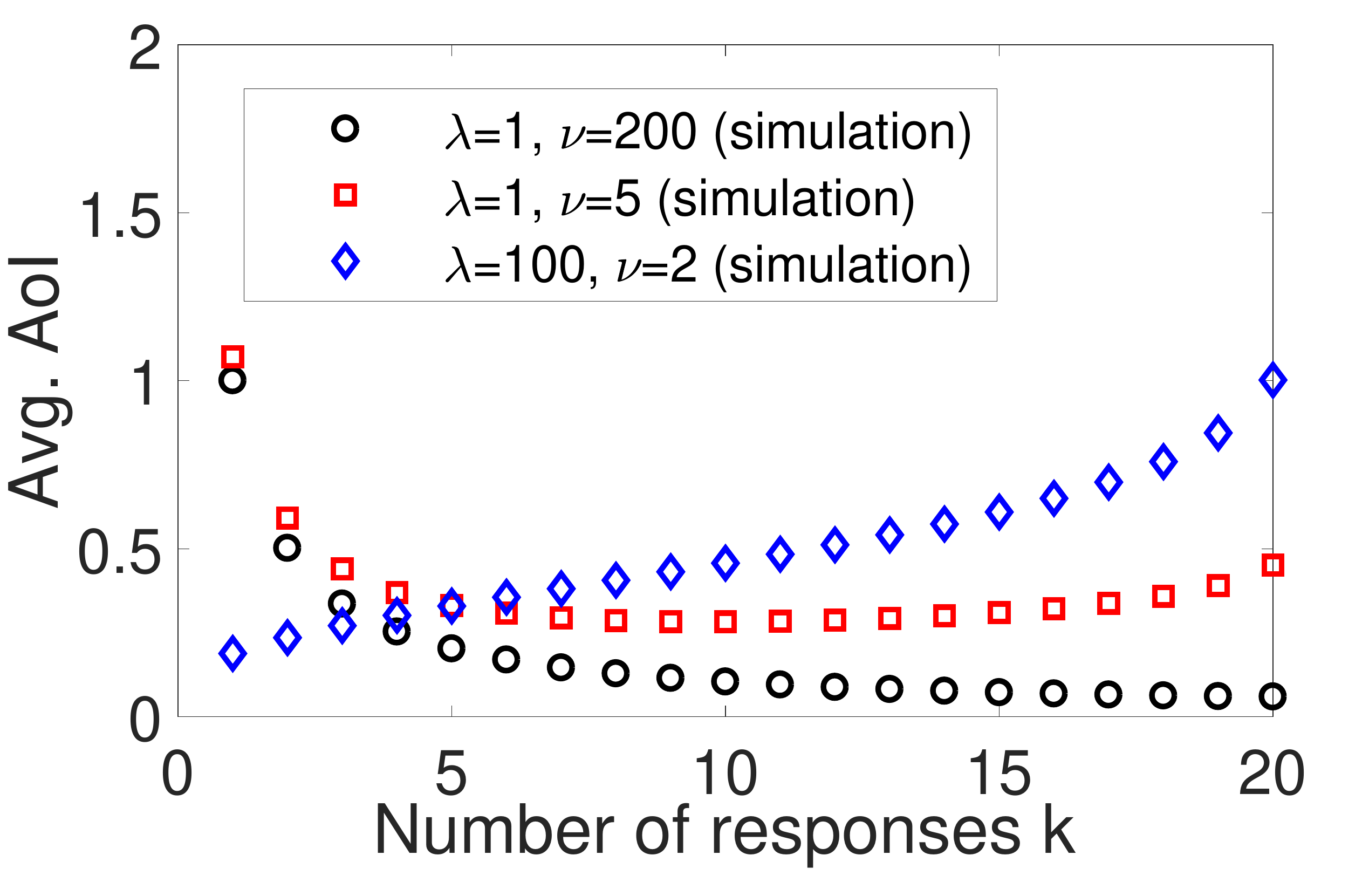}
	\caption{Gamma response time}
	\label{fig:gamma}
\end{subfigure}
\caption{Simulation results of average AoI vs. the number of responses $k$ for three different types of response time distributions}
\label{fig:sim}
\end{figure*}

\begin{figure*}[!t]
	\centering
	\begin{subfigure}[b]{0.31\linewidth}
		\includegraphics[width=1\textwidth]{./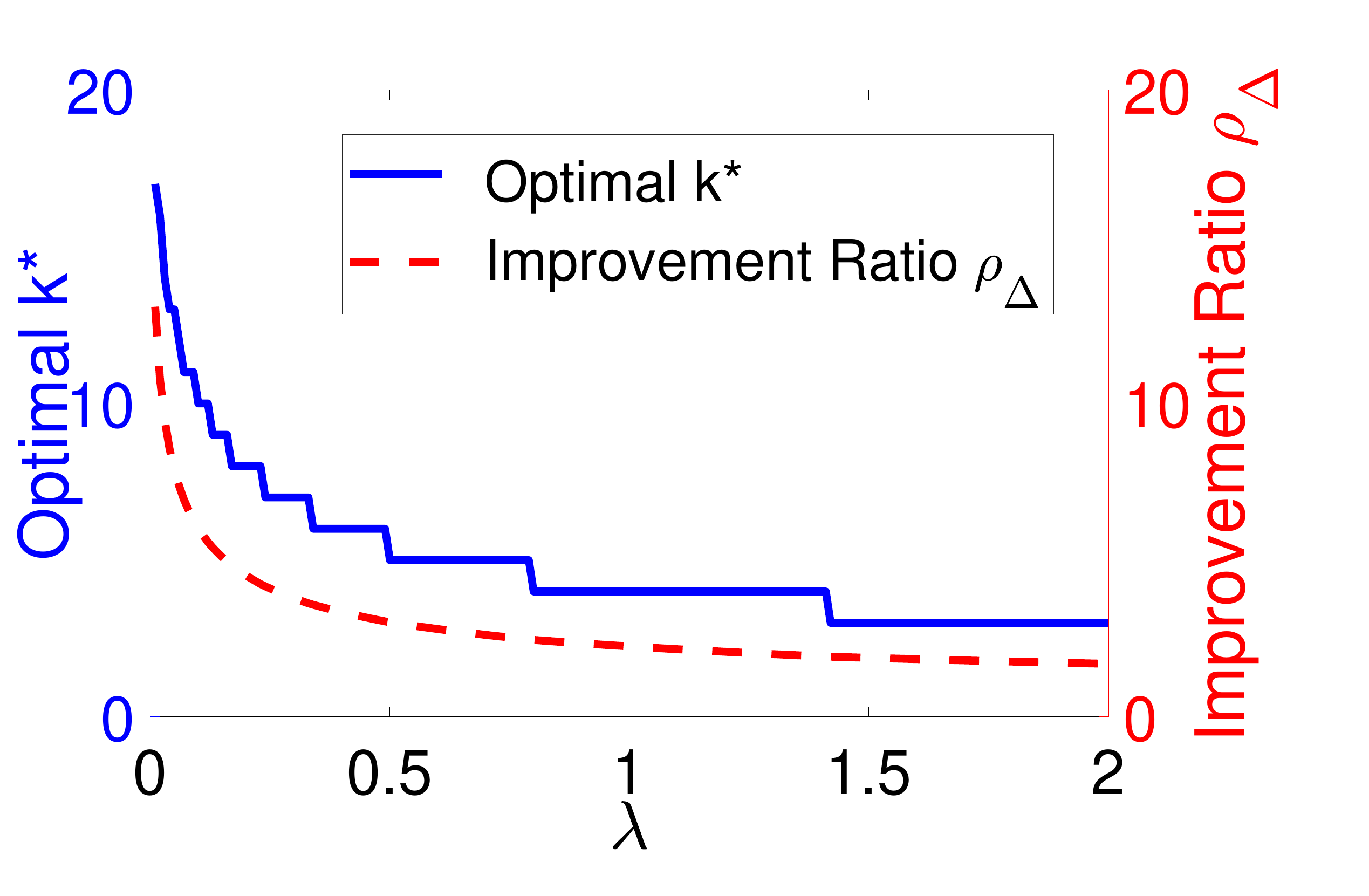}
		\caption{Impact of updating rate $\lambda$}
		\label{subfig:l}
	\end{subfigure}
	\quad
	\begin{subfigure}[b]{0.31\linewidth}
		\includegraphics[width=1\textwidth]{./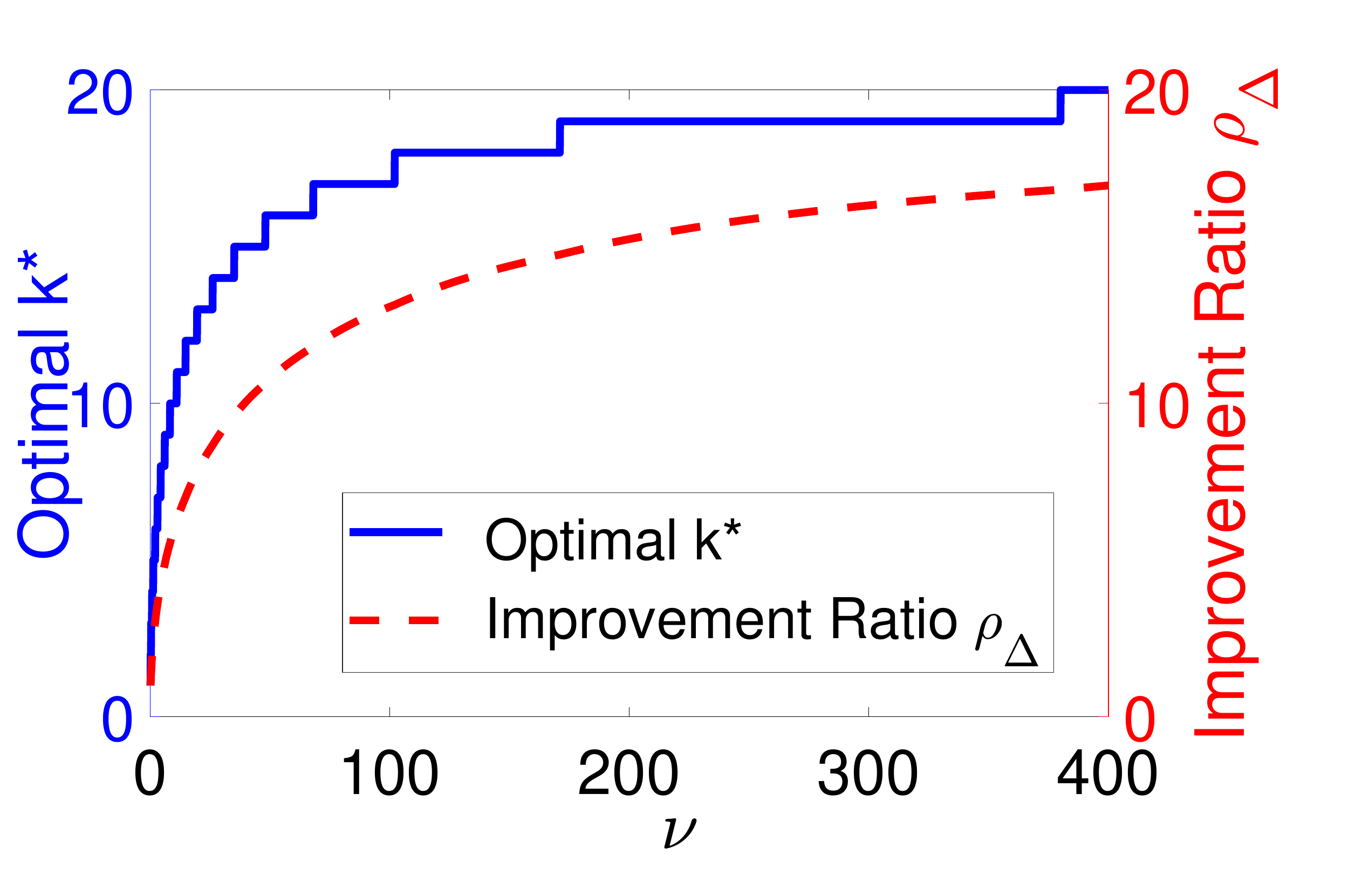}
		\caption{Impact of mean response time $1/\nu$}
		\label{subfig:m}
	\end{subfigure}
	\quad
	\begin{subfigure}[b]{0.31\linewidth}
		\includegraphics[width=1\textwidth]{./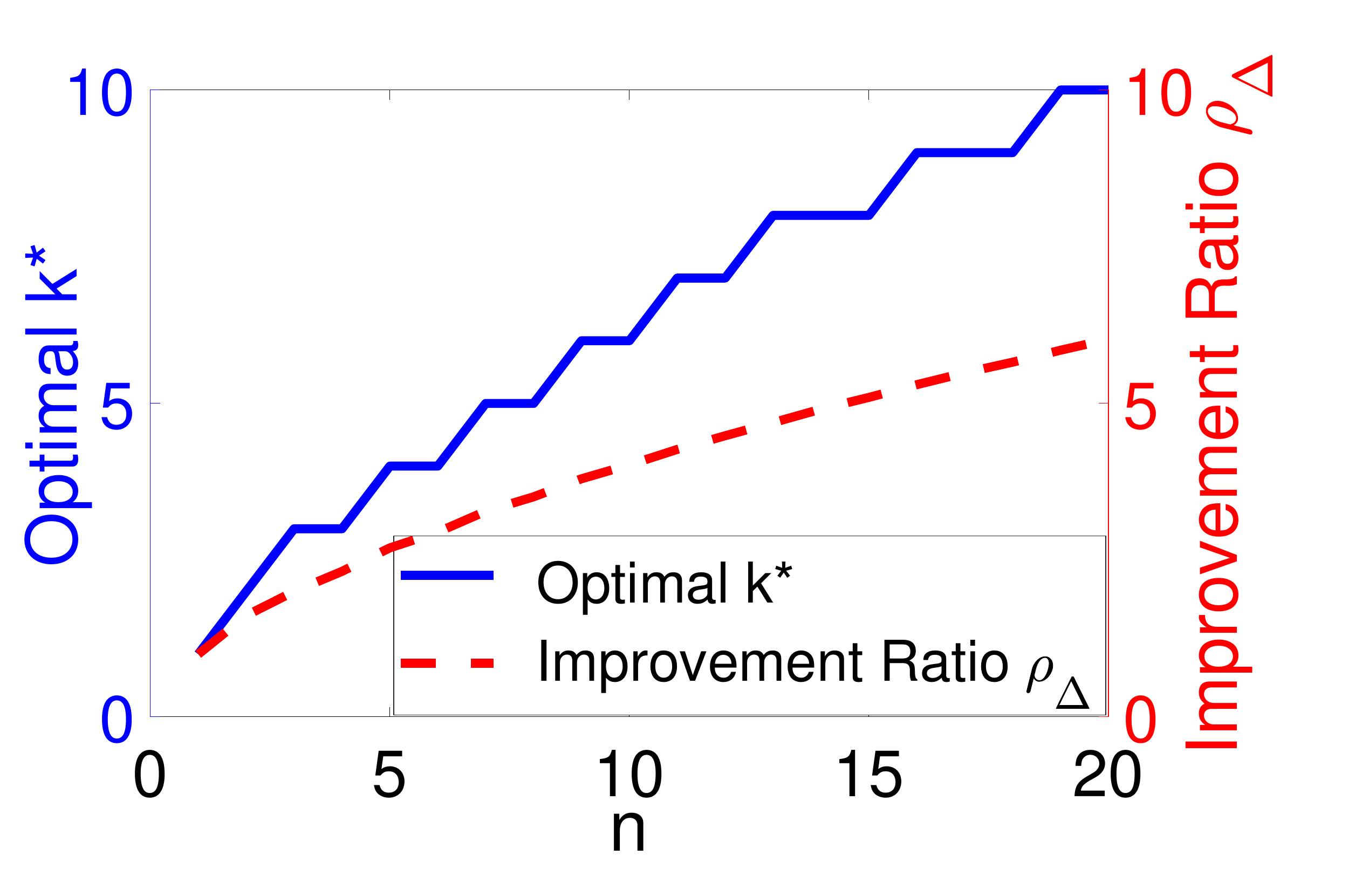}
		\caption{Impact of total number of servers $n$}
		\label{subfig:n}
	\end{subfigure}
	\caption{Impact of the system parameters on the optimal $k^*$ and the corresponding improvement ratio.
	We consider the exponential distribution for the response time.
	In (a), we fix $\nu = 1, n = 20$; in (b), we fix $\lambda = 1, n = 20$; in (c), we fix $\lambda = 1, \nu = 10$.}
	\label{fig:impacts}
\end{figure*}

\section{Numerical Results}\label{sec:sim}
In this section, we perform extensive simulations to elucidate our theoretical results. We present the simulation results for AoI minimization and AoI-based utility maximization in Sections~\ref{sec:sim_aoi} and \ref{sec:sim_utility}, respectively.

\subsection{Simulation Results for AoI Minimization}\label{sec:sim_aoi}
We first describe our simulation settings. We consider an information-update system with $n=20$ servers.
Throughout the simulations, the updating process at each server is assumed to be Poisson with rate $\lambda$ and is \emph{i.i.d.} across the servers. The user's request for the information is generated at time $s$, which is selected uniformly at random on interval $[0, 10^6/\lambda]$. This means that each server has a total of $10^6$ updates on average. 

Next, we evaluate the AoI performance through simulations for three types of response time distribution: 
\emph{exponential}, \emph{uniform}, and \emph{Gamma}.
First, we assume that the response time is exponentially distributed with mean $1/\nu$. We consider three representative setups: (i) $\lambda=1, \nu = 200$; (ii) $\lambda=1, \nu=5$; (iii) $\lambda=100, \nu = 2$.
Fig.~\ref{fig:exp} shows how the average AoI changes as the number of responses $k$ varies, where each point represents an average of $10^3$ simulation runs.
We also include plots of our theoretical results (i.e., Eq.~\eqref{eq:e_aoi_formula}) for comparison.
A crucial observation from Fig.~\ref{fig:exp} is that the simulation results match our theoretical results perfectly. 
In addition, we observe three different behaviors of the average AoI performance:
a) If the inter-update time is much larger than the response time (e.g., $\lambda = 1$, $\nu = 200$),
then the average AoI decreases as $k$ increases, and thus, it is worth waiting for all the responses so as to achieve a smaller average AoI. 
b) In contrast, if the inter-update time is much smaller than the response time (e.g., $\lambda = 100$, $\nu = 2$), 
then the average AoI increases as $k$ increases, and thus, it is not beneficial to wait for more than one response. 
c) When the inter-update time is comparable to the response time (e.g., $\lambda=1$, $\nu=5$),
then as $k$ increases, the AoI would first decrease and then increase. 
In this setup, when $k$ is small, the freshness of the data at the servers dominates, and thus, waiting for more responses helps reduce the average AoI. 
On the other hand, when $k$ becomes large, the total waiting time becomes dominant, and thus, the average AoI increases as $k$ further increases.

In Section~\ref{sec:extensions}, we discussed the extension of our theoretical results to the case of uniformly distributed response time.
Hence, we also perform simulations for the response time uniformly distributed on $[\frac{1}{2\nu}, \frac{3}{2\nu}]$ with mean $1/\nu$. 
Fig.~\ref{fig:uni} presents the average AoI as the number of responses $k$ changes. In this scenario, the simulation results also
match our theoretical results (i.e., Eq.~\eqref{eq:e_aoi_uniform}). 
Also, we observe a very similar phenomenon to that in Fig.~\ref{fig:exp} on how the average AoI varies as $k$ increases in three 
different simulation setups. 

In addition, Fig.~\ref{fig:gamma} presents the simulation results for the response time with Gamma distribution, 
which can be used to model the response time in relay networks\cite{najm16}. 
Specifically, we consider a special class of the Gamma($r,\theta$) distribution that is the sum 
of $r$ \emph{i.i.d.} exponential random variables with mean $\theta$  (which is also called the Erlang distribution).
Then, the mean response time $1/\nu$ is equal to $r\theta$. We fix $r=5$ in the simulations.
In this case, although we do not have any analytical results, the observations are similar to that under the exponential and uniform distributions. 

Finally, we investigate the impact of the system parameters (the updating rate, the mean response time, 
and the total number of servers) on the optimal number of responses $k^*$ and the \emph{AoI improvement ratio}, 
defined as $\rho_{\Delta} \triangleq \mathds{E}[\Delta(1)]/\mathds{E}[\Delta(k^*)]$.
The AoI improvement ratio captures the gain in the AoI reduction under the optimal scheme 
compared to a naive scheme of waiting for the first response only. 

Fig.~\ref{subfig:l} shows the impact of the updating rate $\lambda$. 
We observe that the optimal number of responses $k^*$ decreases as $\lambda$ increases. 
This is because when the updating rate is large, the AoI diversity at the servers is small. 
In this case, waiting for more responses is unlikely to receive a response with much fresher 
information. Therefore, 
the optimal scheme will simply be a naive scheme that waits for the first response only, when the updating rate is relatively large (e.g., $\lambda=2$). 
%
%
Fig.~\ref{subfig:m} shows the impact of the mean response time $1/\nu$. 
We observe that the optimal number of responses $k^*$ increases as $\nu$ increases.
This is because when $\nu$ is large (i.e., when the mean response time is small),
the cost of waiting for additional responses becomes marginal, and thus, waiting for more 
responses is likely to lead to the reception of a response with fresher information.
%
Fig.~\ref{subfig:n} shows the impact of the total number of servers $n$.
We observe that both the optimal number of responses $k^*$ and
the improvement ratio increase with $n$.
This is because an increased number of servers leads to more diversity gains
both in the AoI at the servers and in the response time.
As we discussed at the end of Section~\ref{sec:aoi-opt}, the optimal solution $k^*$ scales with $O(\sqrt{n})$ as the number of servers $n$ becomes large.


\begin{figure*}[!t]
\centering
\begin{subfigure}[b]{0.31\linewidth}
	\centering
	\includegraphics[width=1\textwidth]{./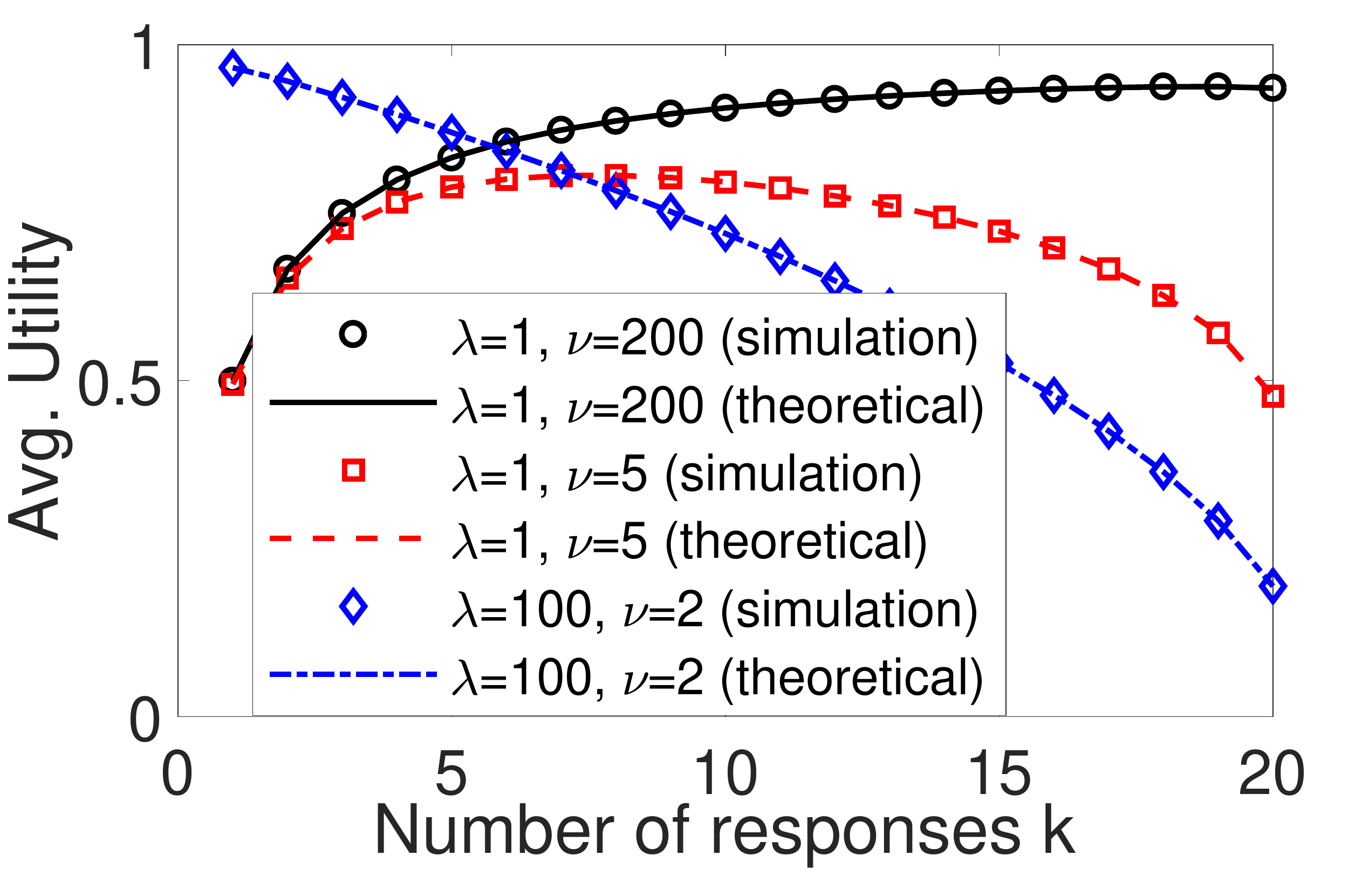}
	\caption{Exponential response time}
	\label{fig:rd_exp}
\end{subfigure}
\quad
\begin{subfigure}[b]{0.31\linewidth}
	\centering
	\includegraphics[width=1\textwidth]{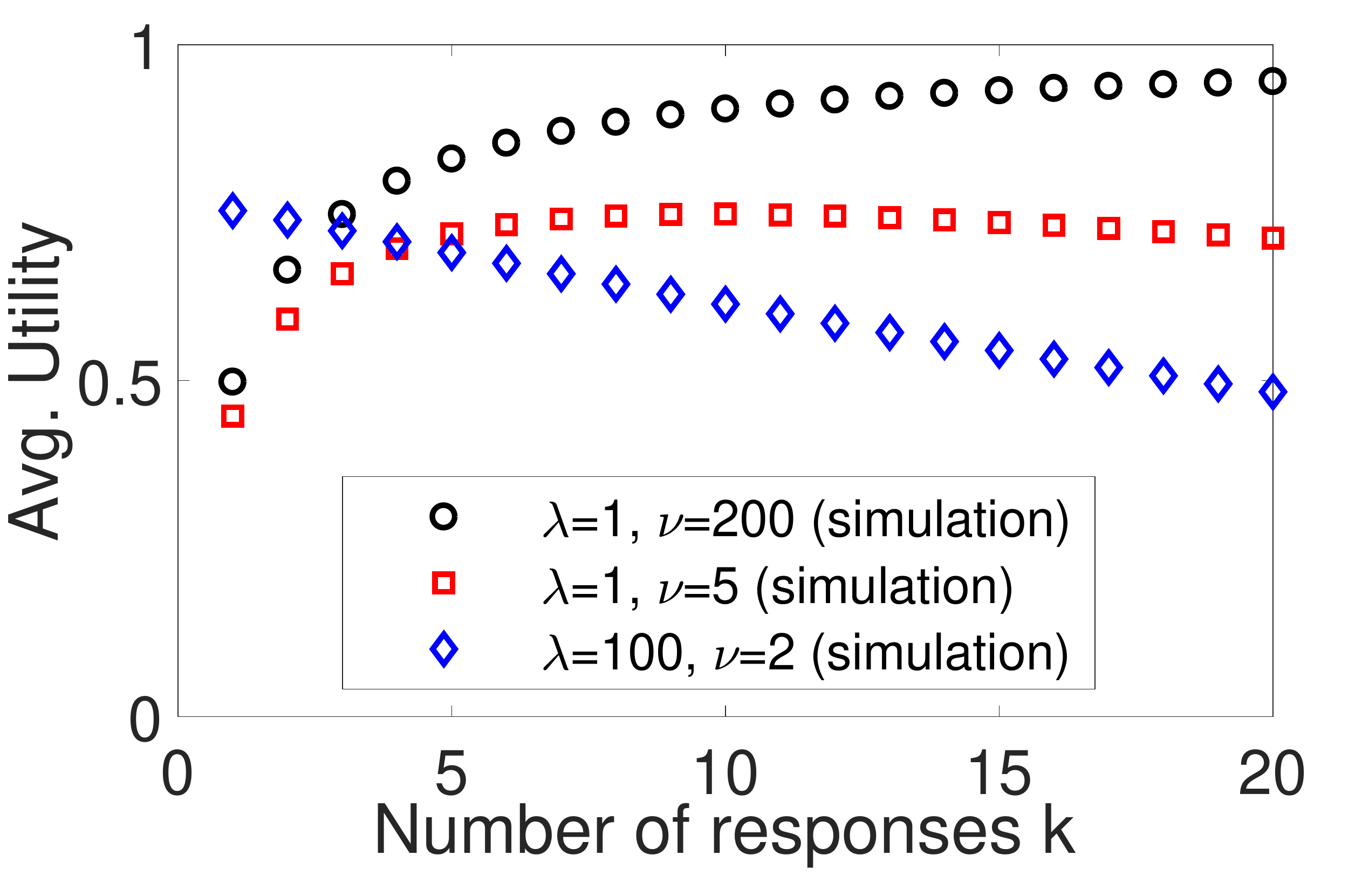}
	\caption{Uniform response time}
	\label{fig:rd_uni}
\end{subfigure}
\quad
\begin{subfigure}[b]{0.31\linewidth}
	\centering
	\includegraphics[width=1\textwidth]{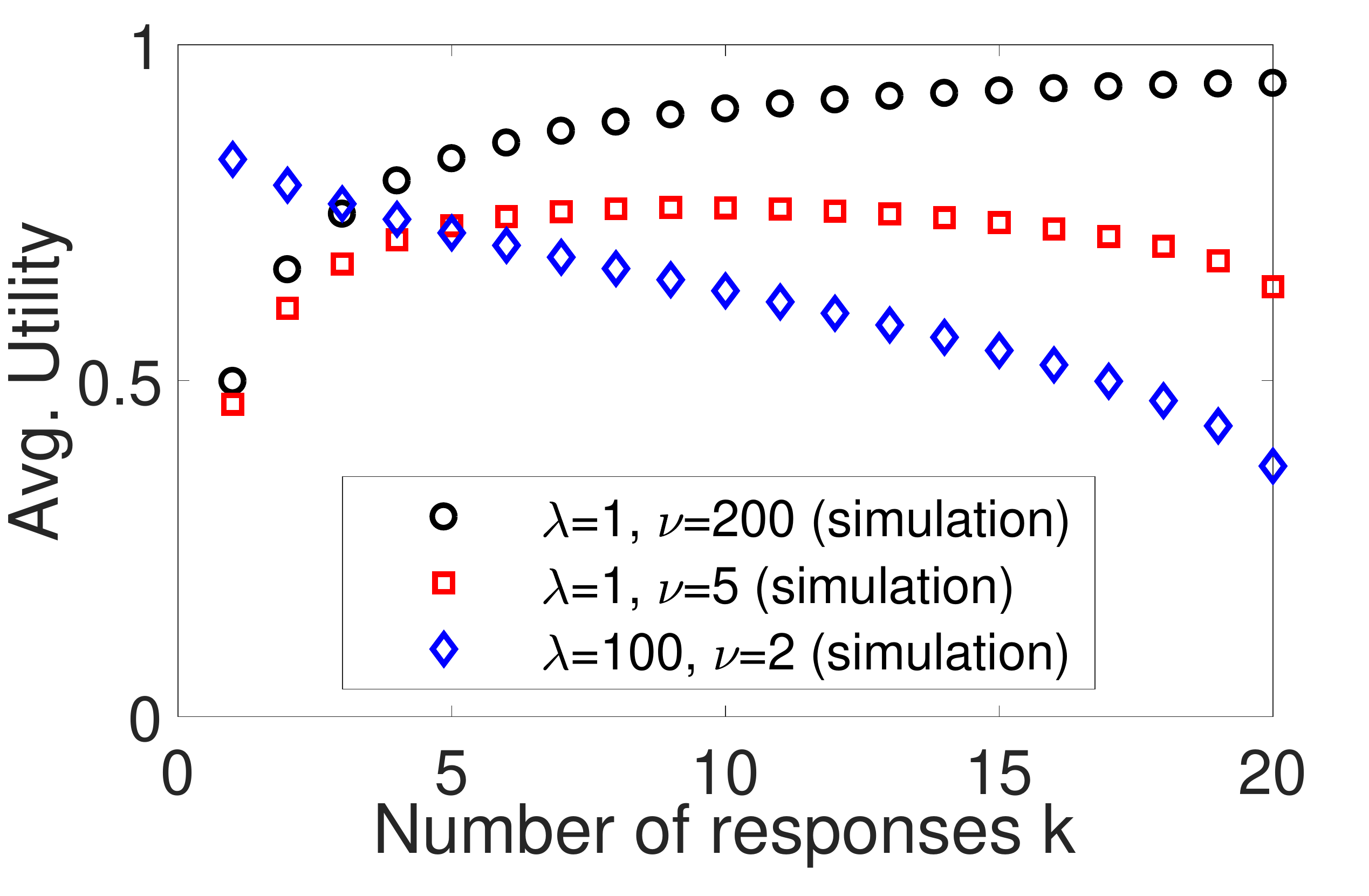}
	\caption{Gamma response time}
	\label{fig:rd_gamma}
\end{subfigure}
\caption{Simulation results of average utility vs. the number of responses $k$ for three different types of response time distributions}
\label{fig:sim_rd}
\end{figure*}

\begin{figure*}[!t]
	\centering
	\begin{subfigure}[b]{0.31\linewidth}
		\includegraphics[width=1\textwidth]{./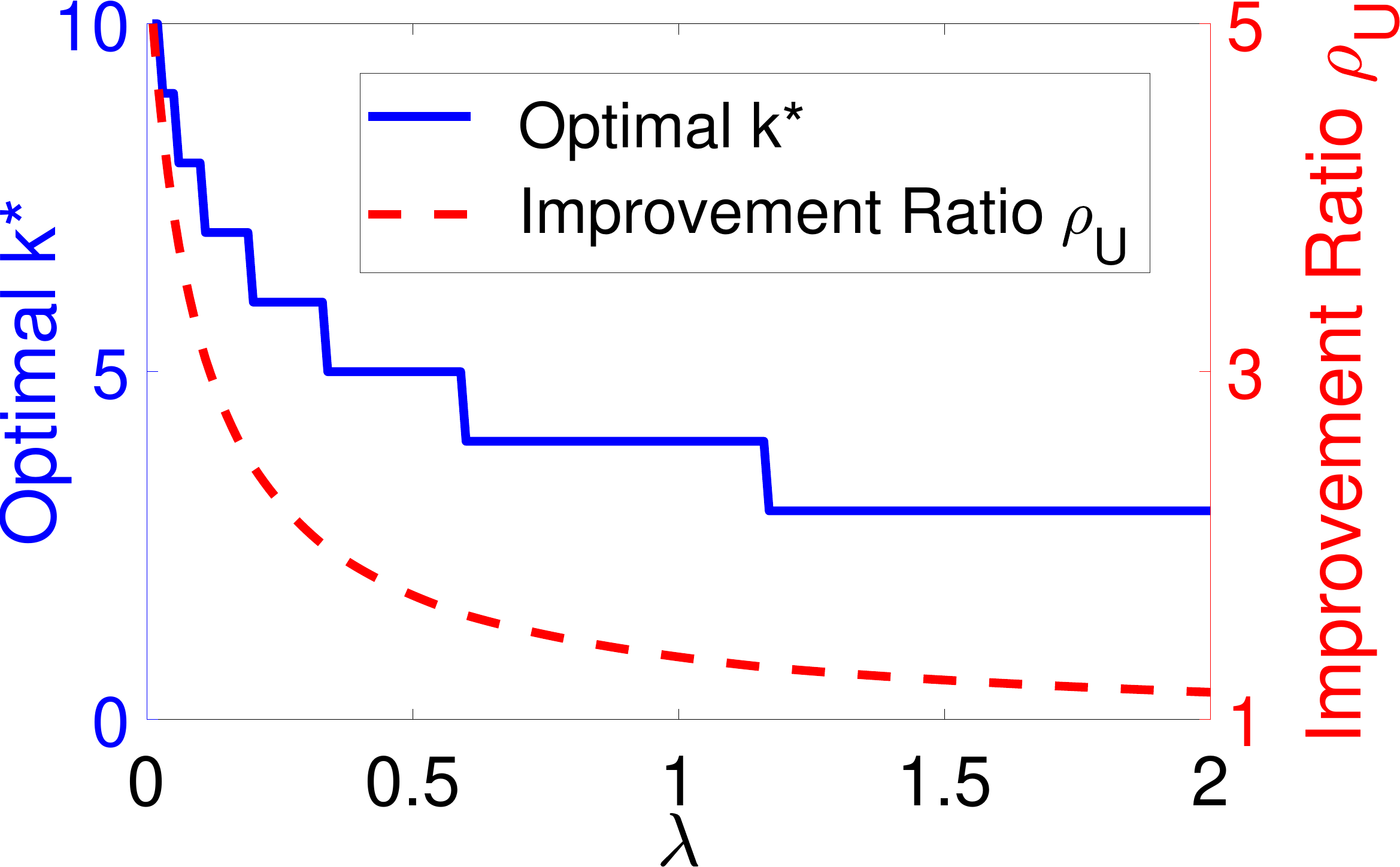}
		\caption{Impact of updating rate $\lambda$}
		\label{subfig:rd_l}
	\end{subfigure}
	\quad
	\begin{subfigure}[b]{0.31\linewidth}
		\includegraphics[width=1\textwidth]{./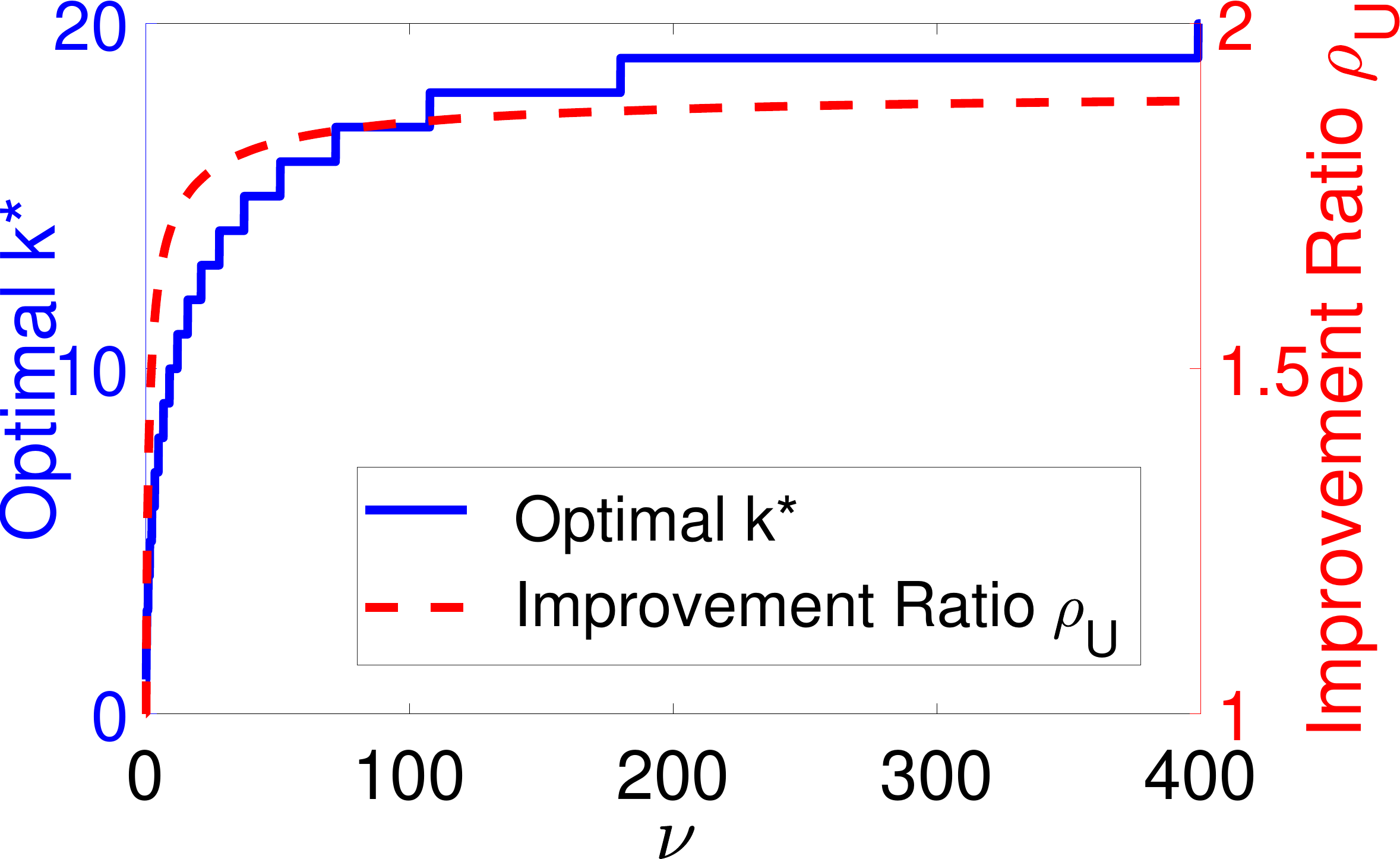}
		\caption{Impact of mean response time $1/\nu$}
		\label{subfig:rd_m}
	\end{subfigure}
	\quad
	\begin{subfigure}[b]{0.31\linewidth}
		\includegraphics[width=1\textwidth]{./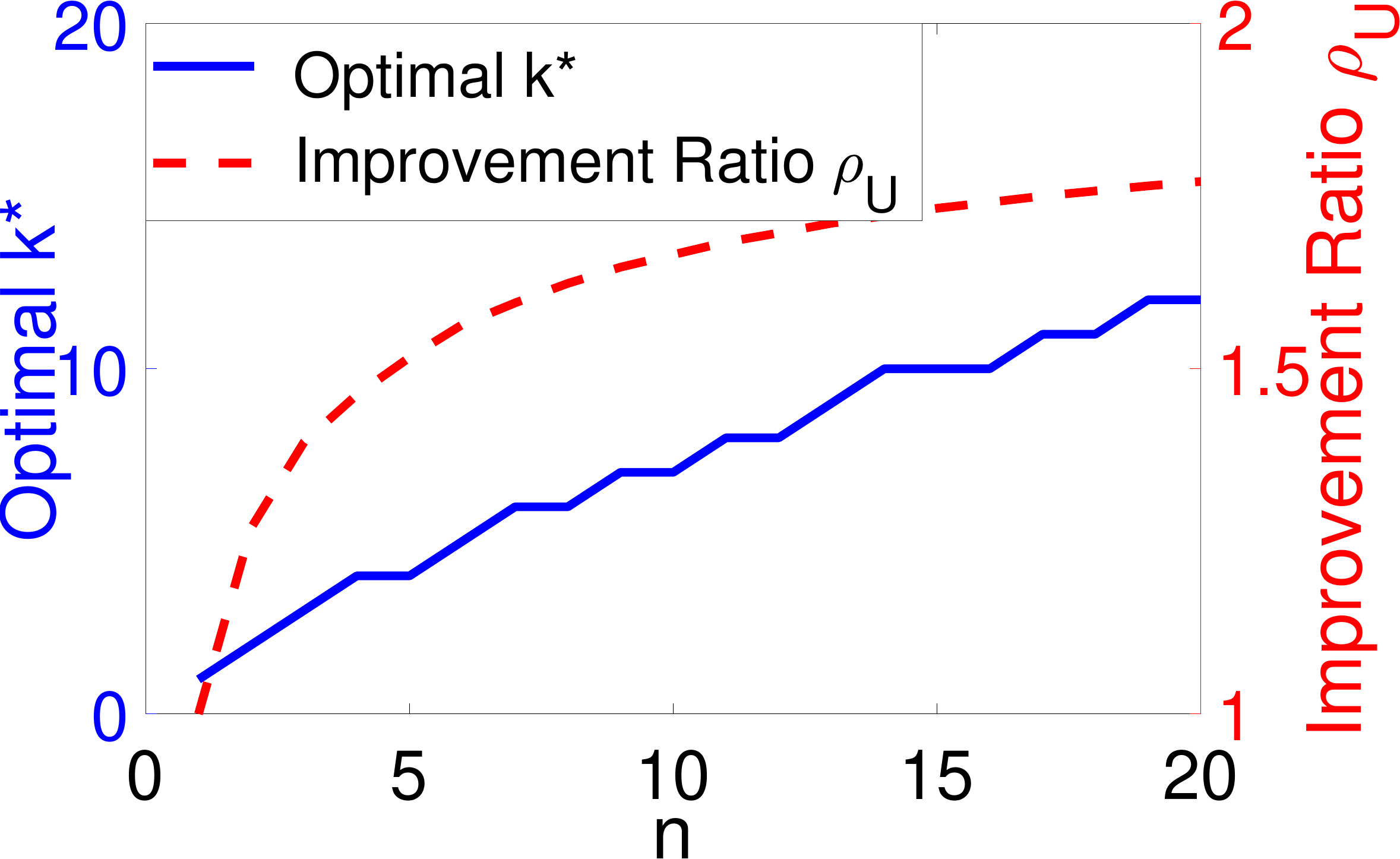}
		\caption{Impact of total number of servers $n$}
		\label{subfig:rd_n}
	\end{subfigure}
	\caption{Impact of the system parameters on the optimal $k^*$ for utility maximization and the corresponding improvement ratio.
	We consider the exponential distribution for the response time.
	In (a), we fix $\nu = 1, n = 20$; in (b), we fix $\lambda = 1, n = 20$; in (c), we fix $\lambda = 1, \nu = 10$.}
	\label{fig:rd_impacts}
\end{figure*}

\begin{figure*}[!t]
		\centering
		\begin{subfigure}[b]{0.31\linewidth}
			\includegraphics[width=1\textwidth]{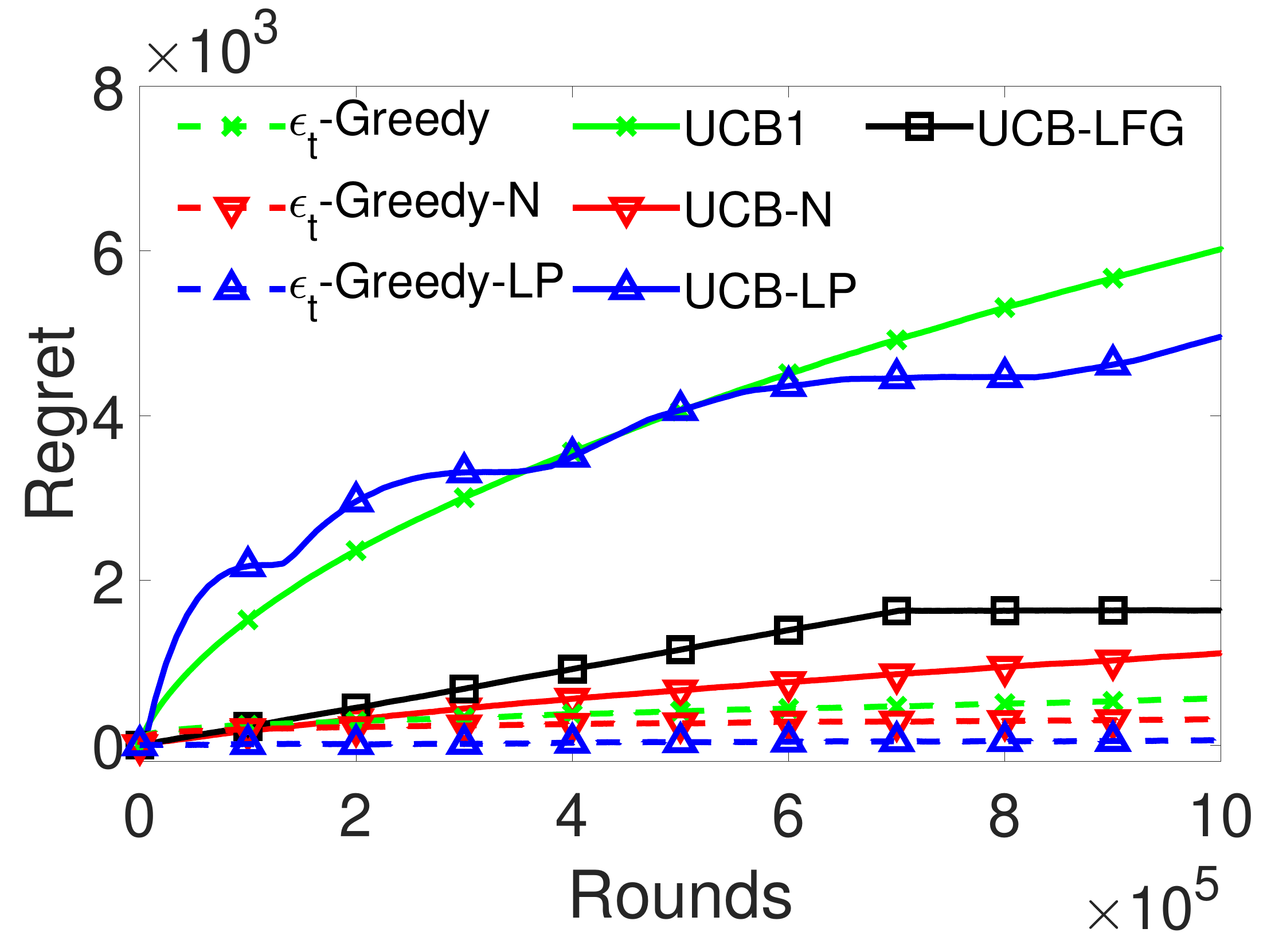}
			\caption{Setup (i): $\lambda=1, \nu=200$}
			\label{fig:regret-cmp-1}
		\end{subfigure}
		\quad
		\begin{subfigure}[b]{0.31\linewidth}
			\includegraphics[width=1\textwidth]{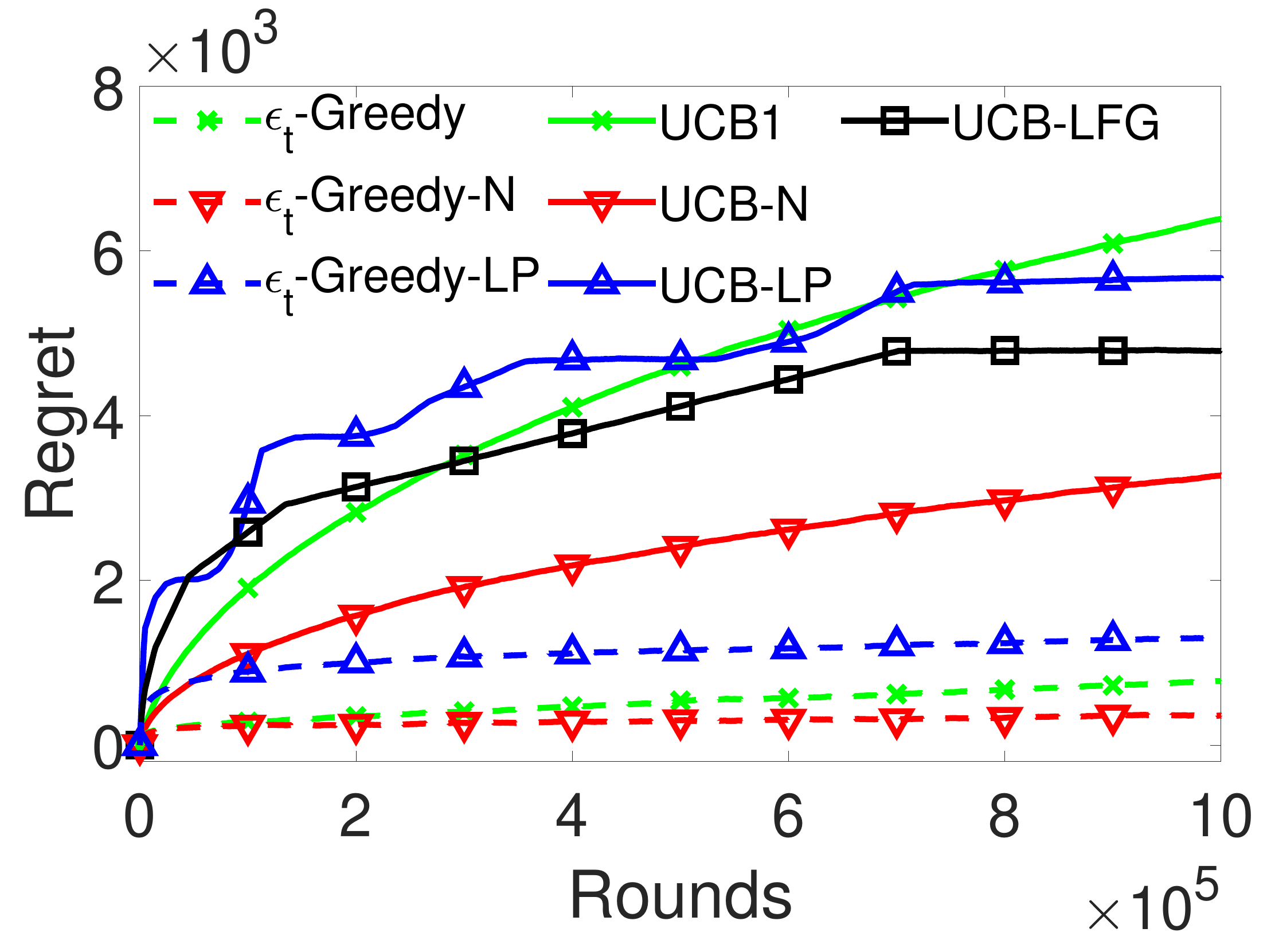}
			\caption{Setup (ii): $\lambda=1, \nu=5$}
			\label{fig:regret-cmp-2}
		\end{subfigure}
		\quad
		\begin{subfigure}[b]{0.31\linewidth}
			\includegraphics[width=1\textwidth]{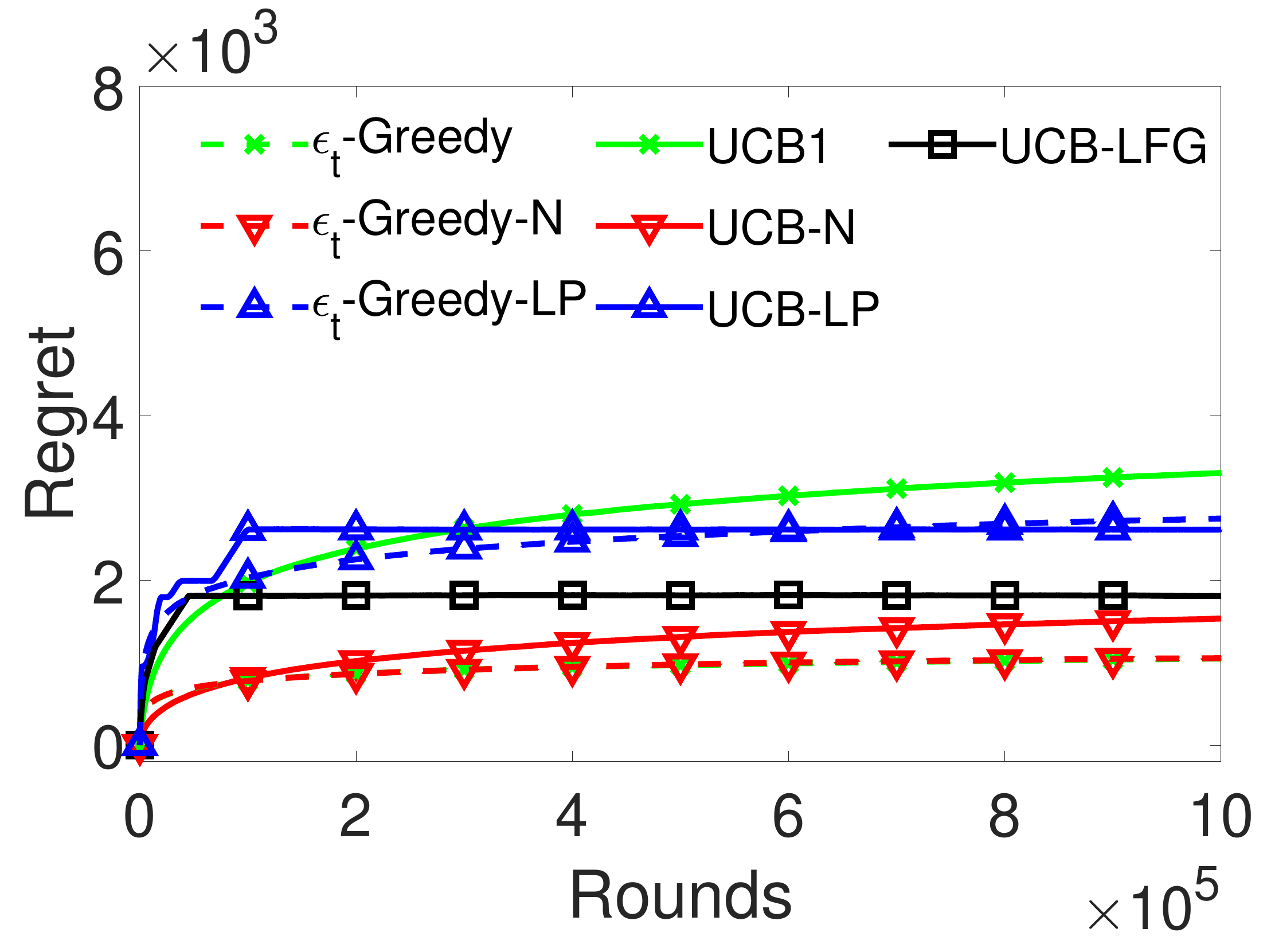}
			\caption{Setup (iii): $\lambda=100, \nu=2$}
			\label{fig:regret-cmp-3}
		\end{subfigure}
		\caption{Comparison of regret performance of various learning algorithms in three representative setups}
		\label{fig:regret-cmp}
\end{figure*}

\subsection{Simulation Results for AoI-based Utility Maximization}\label{sec:sim_utility}


In this subsection, we consider the maximization of an AoI-based exponential utility function (Eq.~\eqref{eq:utility_exp} with $a=1$) and present simulation results for the utility performance under the same settings as in Section \ref{sec:sim_aoi}. We first evaluate the utility performance in the setting with known system parameters. Then, we evaluate various learning algorithms described in Section \ref{sec:mab_alg}, when the system parameters are unknown.

\subsubsection{Utility Maximization with Known Parameters} 

The setups we consider are exactly the same as those in Section~\ref{sec:sim_aoi}, except that we now focus on the utility performance instead of the AoI performance.
In Fig.~\ref{fig:sim_rd}, we present the simulation results for the average utility performance with a varying number of responses $k$ in three representative setups. The observations are also similar, except that the utility has an opposite trend compared to the AoI. This is because the utility is a non-increasing function of the AoI.

Similarly, we also investigate the impact of the system parameters on the optimal number of responses $k^*$ (with respect to utility maximization) and the \emph{utility improvement ratio}, 
defined as $\rho_{U} \triangleq \mathds{E}[U(\Delta(k^*))]/\mathds{E}[U(\Delta(1))]$.
The utility improvement ratio captures the gain in the utility improvement under the optimal scheme 
compared to a naive scheme of waiting for the first response only. 
We present the results in Fig.~\ref{fig:rd_impacts}, from which we can make similar observations to those from Fig.~\ref{fig:impacts} for AoI minimization.

\subsubsection{Utility Maximization with Unknown Parameters} 

Next, we consider a more realistic scenario where the system parameters (i.e., the updating rate and the mean response time) are unknown to the user. Given that the overall behaviors are similar for different types of response time distributions (see Fig.~\ref{fig:sim_rd}), in the following evaluations we will focus on the case where the update process is Poisson with rate $\lambda$ and the response time is exponentially distributed with mean $1/\nu$. 

As in Section~\ref{sec:sim_aoi}, we assume $n=20$ and consider three representative setups: (i) $\lambda=1, \nu = 200$; (ii) $\lambda=1, \nu=5$; (iii) $\lambda=100, \nu = 2$.
We evaluate the regret performance of two classes of learning algorithms we introduced in Section~\ref{sec:mab_alg}: the Greedy algorithms (i.e., $\epsilon_t$-Greedy, $\epsilon_t$-Greedy-N, and $\epsilon_t$-Greedy-LP) and the UCB algorithms\footnote{We do not include the results for UCB-Improved as it performs much worse than the other algorithms in the setups we consider.} (i.e., UCB1, UCB-N, UCB-Improved, UCB-LP, and UCB-LFG). For Greedy algorithms, we use $d = 0.05$ and $c = 1$ in all the three setups.
In Fig.~\ref{fig:regret-cmp}, we plot the evolution of cumulative regret of the considered algorithms over $10^6$ rounds for each of the above three setups. 
The results represent an average of 10 simulation runs. 
From the simulation results in Fig.~\ref{fig:regret-cmp}, we can observe the following. 

First, algorithms that take advantage of side observations generally outperform their counterparts that do not use side observations. That is, $\epsilon_t$-Greedy-N and UCB-N outperform $\epsilon_t$-Greedy and UCB1, respectively. This is because additional samples from side observations can help accelerate the learning process.

Second, although graph-aware algorithms can achieve improved regret upper bounds, their empirical performances may or may not be better than that of their graph-agnostic counterparts. That is, $\epsilon_t$-Greedy-LP and UCB-LP may or may not be better than $\epsilon_t$-Greedy-N and UCB-N, respectively. Consider the Greedy algorithms for example. In Setup (i), $\epsilon_t$-Greedy-LP slightly outperforms $\epsilon_t$-Greedy-N. This is because in the phase of exploration, $\epsilon_t$-Greedy-LP always chooses arm $n$, which happens to be the best arm. However, in Setup (ii), $\epsilon_t$-Greedy-LP performs worse than $\epsilon_t$-Greedy-N. This is because arm $n$ is no longer the best arm, and in fact, it can be much worse than the optimal arm. This phenomenon is exacerbated in Setup (iii), where arm $n$ is the worst arm. Among all the considered UCB algorithms, UCB-N has the best empirical performance. This is because UCB-LP and UCB-LFG are modified from UCB-Improved, which is an ``arm-elimination" algorithm and is very different from UCB1, from which UCB-N is modified. Although UCB-Improved has a better regret upper bound with a smaller constant factor, it has a much worse empirical performance than UCB1 in the setups we consider. Therefore, it is not surprising that UCB-N has a better empirical performance than UCB-LP and UCB-LFG.

Third, UCB-LFG typically outperforms UCB-LP. This is expected because UCB-LFG is a further enhanced version of UCB-LP. Specifically, UCB-LFG explicitly exploits the linear structure of the feedback graph and can accelerate the learning process by reducing the number of rounds for exploration.

Finally, $\epsilon_t$-Greedy-N seems to be quite robust and has a very good empirical performance in all the setups we consider.

\section{Conclusion}\label{sec:conclusion}
In this paper, we introduced a new Pull model for studying the problems of AoI minimization and AoI-based utility maximization under the replication schemes.
Assuming Poisson updating process and exponentially distributed response time, we derived the closed-form expression
of the expected AoI at the user's side and provided a formula
for computing the optimal solution. 
We also derived a set of similar theoretical results for the utility maximization problem. 
Furthermore, we considered a more realistic scenario where the user has no prior knowledge of the system parameters.
In this setting, we reformulated the utility maximization problem as a stochastic MAB problem with side observations. Leveraging the special linear structure of the feedback graph associated with side observations, we introduced several learning algorithms, which outperform those basic algorithms that are agnostic about such properties.
Not only did our work reveal a novel tradeoff between different levels of information
freshness and different response times across the servers, but
we also demonstrated the power of waiting for more than one
response in minimizing the AoI as well as in maximizing the utility at the user's side.

\bibliographystyle{IEEEtran}
\bibliography{ref,aoi}

\appendix 

\subsection{Proof of Theorem \ref{thm:e_utility}} \label{app:e_utility}
\begin{proof}
Note that using Eq.~\eqref{eq:e_utility}, the expected utility can be computed based on the probability density function of the AoI. For the exponential utility function \eqref{eq:utility_exp}, however, we have the following more intuitive way of computing the expected utility.

To begin with, we rewrite the expected utility as follows:
\begin{equation}
\label{eq:e_utility_new}
\begin{split}
\bE[U(\Delta(k))] &=\bE \left[e^{-a\Delta(k)}\right] \\
&=\bE \left[e^{-aR_{(k)} - a\min_{i \in \mK} \Delta_i(s)}\right] \\
& = \bE\left[e^{-aR_{(k)} }\right] \cdot \bE \left[e^{ - a\min_{i \in \mK} \Delta_i(s)}\right],
\end{split}
\end{equation}
where the first equality is from Eq.~\eqref{eq:utility_exp}, the second equality is from Eq.~\eqref{eq:aoik_l}, and the last equality is due to that $R_{(k)}$ and $\min_{i \in \mK} \Delta_i (s)$ are independent.

Then, we want to derive the expression for each of the two terms in the last line of Eq.~\eqref{eq:e_utility_new}.

First, we want to show $\bE[e^{-aR_{(k)}}] = \prod_{j=1}^k \frac{(n+1-j)\nu}{(n+1-j)\nu +a}$. Note that for an exponential random variable $X$ with mean $1/\alpha$, variable $aX$ is also an exponential random variable but with mean $a/\alpha$, and it is easy to show the following:
\begin{equation}
\label{eq:exp_exp}
\bE[e^{-aX}] = \frac{\alpha}{\alpha+a},
\end{equation}
Recall from the proof of Theorem~\ref{thm:e_aoi} that $R_{(j)} - R_{(j-1)}$ is an exponential random variable with mean $\frac{1}{(n+1-j)\nu}$ for any $j \in \mN$ and that the exponential random variables $(R_{(j)} - R_{(j-1)})$'s are all independent~\cite{firstcourse}. Then, we can derive the following:
\begin{equation}
\begin{split}
\label{eq:e_expR_k}
\bE \left[e^{-aR_{(k)}}\right] &= \bE \left[e^{-a\sum_{j=1}^{k} (R_{(j)} - R_{(j-1)})}\right] \\
&=\bE \left[\prod_{j=1}^{k}e^{-a(R_{(j)} - R_{(j-1)})}\right] \\
& =\prod_{j=1}^{k}\bE \left[e^{-a(R_{(j)} - R_{(j-1)})}\right] \\
&= \prod_{j=1}^k \frac{(n+1-j)\nu}{(n+1-j)\nu +a},
\end{split}
\end{equation}
where the last equality is from Eq.~\eqref{eq:exp_exp}. 

Next, we want to show $\bE \left[e^{ -a \min_{i \in \mK} \Delta_i(s)}\right] = \frac{k\lambda}{k\lambda+a}$. Recall that $\min_{i \in \mK} \Delta_i(s)$ is an exponential random variable with mean $\frac{1}{k\lambda}$. Then, it is straightforward to see $\bE \left[e^{ -a \min_{i \in \mK} \Delta_i(s)}\right] = \frac{k\lambda}{k\lambda+a}$ due to Eq.~\eqref{eq:exp_exp}.

Combining the above results, we complete the proof.
\end{proof}
\subsection{Proof of Theorem \ref{thm:utility_opt}} \label{app:utility_opt}
\begin{proof}
	We first define $r(k)$ as the ratio of the expected utility between the $(n, k+1)$ and $(n,k)$ replication schemes, i.e., $r(k) \triangleq \bE[U(\Delta(k+1))]/\bE[U(\Delta(k))]$ for any $k \in \{1,2,\dots,n-1\}$.
	From Eq.~\eqref{eq:e_utility_formula}, we have the following:
	\begin{equation}\label{eq:ratio}
	\begin{split}
	\\r(k) &= \frac{(k+1)(k\lambda+a)}{k\left((k+1)\lambda+a\right)} \cdot \frac{(n-k)\nu}{(n-k)\nu+a} \\
	&=\left(1+\frac{a}{\lambda k^2+(\lambda+a)k}\right)\cdot \frac{\nu}{\frac{a}{n-k}+\nu},\\
	\end{split}
	\end{equation}
	for any $k \in \{1,2,\dots,n-1\}$.
	It is easy to see that $r(k)$ is a monotonically decreasing function of $k$.
	
	We now extend the domain of $r(k)$ to the set of positive real numbers and want to find $k^{\prime}$
	such that $r(k^{\prime}) = 1$. With some standard calculations and dropping the negative solution, 
	we derive the following:
	\begin{equation}
	k^{\prime} = \frac{2\nu n}{\sqrt{(\lambda + \nu+a)^2+4\lambda\nu n}+\lambda+\nu+a}.
	\end{equation}

	Next, we discuss two cases: (i) $k^{\prime}>n-1$ and (ii) $0 < k^{\prime} \le n-1$.
	
	In Case (i), we have $k^{\prime}>n-1$. This implies that $r(k)=\bE[U(\Delta(k+1))]/\bE[U(\Delta(k))]>1$ for all $k \in \{1,2,\dots,n-1\}$, as $r(k)$ is monotonically decreasing. Hence, the expected utility, $\bE[U(\Delta(k))]$, is a monotonically increasing function for $k \in \{1,2,\dots,n\}$.
	Therefore, $k^*=n$ must be the optimal solution.
	
	In Case (ii), we have $0< k^{\prime} \le n-1$. We consider two subcases: $k^{\prime}$ is an integer in $\{1,2,\dots,n-1\}$ and $k^{\prime}$ is not an integer.
	
	If $k^{\prime}$ is an integer in $\{1,2,\dots,n-1\}$, we have $r(k)=\bE[U(\Delta(k+1))]/\bE[U(\Delta(k))] \ge 1$ for $k \in \{1,2,\dots,k^{\prime}\}$ and $r(k)=\bE[U(\Delta(k+1))]/\bE[U(\Delta(k))]<1$ 
	for $k\in\{k^{\prime}+1,\dots,n\}$, as $r(k)$ is monotonically decreasing.
	Hence, the expected utility, $\bE[U(\Delta(k))]$, is first increasing (for $k \in \{1,2,\dots,k^{\prime}\}$) and then decreasing (for $k\in\{k^{\prime}+1,\dots,n\}$).
	Therefore, there are two optimal solutions: $k^*=k^{\prime}$ and $k^*=k^{\prime}+1$ since $\bE[U(\Delta(k^\prime+1))] = \bE[U(\Delta(k^\prime))]$ (due to $r(k^{\prime})=1$).
	
	If $k^{\prime}$ is not an integer, we have $r(k)=\bE[U(\Delta(k+1))]/\bE[U(\Delta(k))]>1$ for $k \in \{1,2,\dots,\lfloor k^{\prime} \rfloor\}$ and $r(k)=\bE[U(\Delta(k+1))]/\bE[U(\Delta(k))]<1$ 
	for $k\in\{\lceil k^{\prime} \rceil,\dots,n\}$, as $r(k)$ is monotonically decreasing.
	Hence, the expected reward $\mu(k)$ is first increasing (for $k \in \{1,2,\dots,\lfloor k^{\prime} \rfloor, \lceil k^{\prime} \rceil \}$) and then decreasing (for $k\in\{\lceil k^{\prime} \rceil,\dots,n\}$).
	Therefore, $k^* = \lceil k^{\prime} \rceil$ must be the optimal solution.
	
	Combining two subcases, we have $k^* = \lceil k^{\prime} \rceil$ in Case (ii). Then, combining Cases (i) and (ii), we have 
	$k^* = \min \{\lceil k^{\prime} \rceil, n\} = \min \{ \lceil \frac{2\nu n}{\sqrt{(\lambda + \nu+a)^2+4\lambda\nu n}+\lambda+\nu+a} \rceil, n \}$.
\end{proof}
\subsection{Proof of Corollary \ref{cor:utility_opt_special}} \label{app:utility_opt_special}
\begin{proof}
	The proof follows straightforwardly from Theorem~\ref{thm:utility_opt}. 
	A little thought gives the following: $k^*=1$ is an optimal solution if and only if $r(1) \leq 1$. 
	Solving $r(1) = \frac{2(\lambda+a)}{(2\lambda+a)} \cdot \frac{(n-1)\nu}{(n-1)\nu+a} \leq 1$ gives $\lambda \geq \frac{\nu(n-1)}{2} - \frac{a}{2}$.
	Similarly, $k^*=n$ is an optimal solution if and only if $r(n-1) \ge 1$. 
	Solving $r(n-1) = \frac{n((n-1)\lambda+a)}{(n-1)(n\lambda+a)} \cdot \frac{\nu}{\nu+a} \ge 1$ gives $\lambda \leq \frac{\nu}{n(n-1)} - \frac{a}{n}$.
\end{proof}

\subsection{Rewards $X_{k,t}$ Being i.i.d. over Time}\label{app:iid_explanation}
In this section, we provide an explanation for the following: the rewards, i.e., the AoI-based utility samples $X_{k,t}=U(\Delta(k,t))$, of each arm $k$ are \emph{i.i.d.} over time $t$, where $\Delta(k,t)$ is the AoI at the user's side when the user sends the $t$-th request and waits for the first $k$ responses. It suffices to argue that the AoI samples $\Delta(k,t)$ of each arm $k$ are \emph{i.i.d.} over rounds. Note that in the problem we consider, arm $k$ corresponds to waiting for the first $k$ responses; serving the $t$-th request corresponds to round $t$. In Eq.~\eqref{eq:aoik_l}, we have derived the expression of the AoI sample for arm $k$ in each round $t$ as follows (we remove the dependence on $t$ for ease of presentation):
\begin{equation}
\Delta(k) = R_{(k)} + \min_{i \in \mK} \Delta_i(s),
\end{equation}
where $R_{(k)}$ is the total waiting time for receiving the first $k$ responses, $\mK$ is the set of the indices of the servers that return the first $k$ responses, and $\min_{i \in \mK} \Delta_i(s)$ is the AoI of the freshest information among these $k$ responses at request time $s$.
We first argue that the first term $R_{(k)}$, which is the $k$-th smallest value among $n$ \emph{i.i.d.} exponential random variables, is \emph{i.i.d.} over rounds. This is true because the response time is exponentially distributed (with mean $1/\nu$) and is \emph{i.i.d.} across the servers. Second, we argue that the second term $\min_{i \in \mK} \Delta_i(s)$ is also \emph{i.i.d.} over rounds.
In the model we consider, the information updates for each server are assumed to follow a Poisson process, and thus, the inter-update time is exponentially distributed for each server and is \emph{i.i.d.} across the servers. Therefore, at any given request time $s$, the AoI at each server (i.e., the elapsed time since the last update) has the same distribution as the inter-update time due to the memoryless property of the exponential distribution. That is, random variable $ \Delta_i(s)$ is also exponentially distributed with mean $1/\lambda$ and is \emph{i.i.d.} across servers. Hence, random variable $\min_{i \in \mK} \Delta_i(s)$ is the minimum of $k$ \emph{i.i.d.} exponential random variables and is thus also \emph{i.i.d.} over rounds. Therefore, the AoI sample $\Delta(k)$ is \emph{i.i.d.} over rounds as it is the sum of two \emph{i.i.d.} random variables $R_{(k)}$ and $\min_{i \in \mK} \Delta_i(s)$.

\end{document}